\documentclass[12pt]{article}
\usepackage[left=2.5cm, right=2.5cm, top=2.50cm, bottom=2.5cm]{geometry}
\usepackage{authblk,amsmath,amsfonts,amssymb,amsthm, physics}
\usepackage{algorithm}
\usepackage{algpseudocode}
\usepackage{graphicx, subfigure, tikz, makecell, multirow}
\usepackage{comment}
\usepackage{epsfig}
\usepackage{adjustbox}
\usepackage{nicefrac}
\usepackage{mathrsfs}
\usepackage{float}
\usepackage{booktabs}
\usepackage{threeparttable}
\usepackage{tabularray}
\usepackage{enumitem}  
\usepackage[utf8]{inputenc}
\usepackage[english]{babel}
\usepackage{pdfpages}
\usepackage{bbm}
\usepackage{appendix}
\usepackage{esint}
\usepackage{mathtools}
\usepackage{colortbl, color}
\usepackage{hyperref}[colorlinks,linkcolor=blue]
\sloppy
\numberwithin{equation}{section}

\newcommand{\e}{\mathrm{e}}
\newcommand{\E}{\mathbb{E}}
\newcommand{\R}{\mathbb{R}}
\newcommand{\bbE}{\mathbb{E}}

\newcommand{\bbR}{\mathbb{R}}
\newcommand{\bbQ}{\mathbb{Q}}

\newcommand{\de}{\mathrm{d}}
\newcommand\fa[1]{\left[#1\right]}
\newcommand\br[1]{\left(#1\right)}

\newcommand\hua[1]{\left\{#1\right\}}

\newcommand\bdTheta{\boldsymbol{\Theta}}
\newcommand\bmtheta{\boldsymbol{\theta}}

\def\calF{\mathcal{F}}

\def\calL{\mathcal{L}}

\def\calN{\mathcal{N}}
\def\calO{\mathcal{O}}

\def\calX{\mathcal{X}}

\def\bdtheta{\boldsymbol{\theta}}

\usepackage{xcolor}
\definecolor{myorange}{RGB}{201,53,56}
\hypersetup{colorlinks=True,citecolor=myorange}

\newtheorem{definition}{Definition}[section]
\newtheorem{remark}{Remark}[section]
\newtheorem{theorem}{Theorem}[section]    
   
\newtheorem{lemma}{Lemma}[section]
\newtheorem{corollary}{Corollary}[section]
\newtheorem{proposition}{Proposition}[section]

\newcommand*\xbar[1]{%
	\hbox{%
		\vbox{%
			\hrule height 0.5pt 
			\kern0.5ex
			\hbox{%
				\kern-0.1em
				\ensuremath{#1}%
				\kern-0.1em
			}%
		}%
	}%
}

\def\XXint#1#2#3{{\setbox0=\hbox{$#1{#2#3}{\int}$}
		\vcenter{\hbox{$#2#3$}}\kern-.5\wd0}}

\title{Efficient Simulation and Calibration of the Rough Bergomi Model via Wasserstein Distance}
\author{Changqing Teng \thanks{Department of Mathematics, The University of Hong Kong, Pok Fu Lam Road, Hong Kong. Email: {\tt{u3553440@connect.hku.hk}}.} \quad and \quad Guanglian Li\thanks{Corresponding author. Department of Mathematics, The University of Hong Kong, Pok Fu Lam Road, Hong Kong. Email: {\tt{lotusli@maths.hku.hk}}.}}
\begin{document}
\maketitle
\begin{abstract}
Despite the empirical success of the rough Bergomi (rBergomi) model in modeling volatility dynamics, its practical use remains challenging due to high computational complexity in both pricing and calibration arising from its non-Markovian structure. To address these difficulties, we develop an efficient computational framework. First, we propose a modified-sum-of-exponentials (mSOE) Monte Carlo scheme within the class of hybrid multifactor approximations. The method combines an exact treatment of the singular kernel over the first time step with a sum-of-exponentials approximation over the remaining time interval, and exact Gaussian simulation of the resulting multifactor components. For a fixed number of exponential terms, the method maintains linear online complexity with respect to the number of time steps. It achieves high pricing accuracy in numerical experiments, particularly for out-of-the-money options. Second, building on this pricing engine, we formulate a calibration approach based on distributional matching of the terminal underlying asset via the Wasserstein-1 distance. Instead of fitting option prices only at selected strikes, this method compares model-generated and market-implied terminal distributions through the Kantorovich-Rubinstein dual representation. Numerical experiments indicate that the mSOE scheme exhibits stable convergence, and the Wasserstein-based calibration scheme improves parameter recovery, optimization stability, and out-of-sample performance relative to conventional MSE-based fitting in the rBergomi setting considered in this paper.
	\end{abstract}
	\textbf{Keywords:} rBergomi model, model calibration, Monte Carlo methods, Wasserstein distance.

	\section{Introduction}
	Empirical studies across a wide range of financial markets suggest that log-volatility time series exhibit behaviors consistent with fractional Brownian motion (fBM) with Hurst index $H \approx 0.1$ over a broad range of time scales \cite{gatheral2022volatility}. This observation has played a central role in the development of rough volatility models. Among them, the rough Bergomi (rBergomi) model introduced in \cite{bayer2016pricing} has attracted particular attention because of its ability to jointly capture the dynamics of both historical and implied volatility.
	
	The asset price process $S$ in the rBergomi model on a filtered probability space $(\Omega, \mathcal{F}, \{\mathcal{F}_t\}_{t\in [0, T]}, \bbQ)$, with $\bbQ$ being a risk-neutral measure, satisfies the following dynamics:
	\begin{equation}\label{eq: rBergomi}
		\de S_t = rS_t\de t + S_t\sqrt{V_t}\br{\rho\de W_t + \sqrt{1 - \rho^2}\de W_t^\perp},\quad S_0 = s_0,\quad t\in[0, T],
	\end{equation}
	where $s_0$ is the initial price, constant $r$ is the risk-free interest rate, $T > 0$ is the time horizon and $W_t$ together with $W_t^\perp$ are two independent standard Brownian motions. The variance process $V_t$ satisfies
	\begin{equation}\label{eq: spot variance}
		V_t = \xi_0(t)\exp\br{\eta\sqrt{2H} \int_{0}^{t}(t-s)^{H - \frac{1}{2}}\de W_s- \frac{\eta^2}{2}t^{2H}}.
	\end{equation}
	The model dynamics are characterized by four key parameters $\bmtheta := (\xi_0(t), H, \rho, \eta)$, where the nonnegative, bounded and integrable function $\xi_0(t) := \mathbb{E}^{\mathbb{Q}}[V_t | \mathcal{F}_0]$ is the so-called initial forward variance curve. The Hurst index $H \in (0, {1}/{2})$ reflects the regularity of $V_t$ and underlies the rough volatility process. The correlation coefficient $\rho \in (-1, 0)$ and the deterministic positive constant $\eta$ is the volatility-of-volatility. \\ 
	Despite its empirical success, the practical implementation of the rBergomi model faces substantial computational hurdles. The model's reliance on the fractional kernel
	\begin{align}\label{eq: fractional kernel}
		K(t) := t^{H - \frac{1}{2}},
	\end{align}
	introduces non-Markovianity and destroys the semimartingale property, thereby invalidating conventional PDE-based pricing methods and the application of the Feynman-Kac theorem. This limitation renders Monte Carlo simulation as the primary numerical approach for option pricing under the rBergomi model. However, the simulation has significant challenges, and the central difficulty lies in the efficient discretization of the Volterra process
	\begin{equation}\label{eq: Volterra process}
		I_t := \sqrt{2H}\int_0^t K(t-s)\de W_s. 
	\end{equation}
	While exact simulation via Cholesky factorization is theoretically feasible \cite{bayer2016pricing}, its $\mathcal{O}(n^3)$ computational complexity and $\mathcal{O}(n^2)$ storage requirements render it practically infeasible for realistic applications, as summarized in Table \ref{table: MC method}. This computational bottleneck has motivated extensive research into approximate simulation schemes. Bennedsen et al. \cite{bennedsen2017hybrid} developed a hybrid scheme for Brownian semistationary processes that handles the kernel singularity by employing power function approximation near the origin and step functions elsewhere. in the case of the rBergomi model, this approach maintains exact treatment of the singularity. Subsequently, Abi Jaber and El Euch \cite{abi2019multifactor}, Bayer and Breneis \cite{bayer2023markovian}, and Zhu et al. \cite{zhu2021markovian} pursued Markovian approximations through sum-of-exponentials (SOE) representations, which transform the problem into simulating a weighted sum of diffusion processes. R{\o}mer \cite{romer2022hybrid} combined these approaches in a hybrid multifactor scheme that preserves the kernel's exact behavior for several time steps near the origin while approximating the remainder via SOE. 
	
	\begin{table}[htp]
		\centering	
		\begin{adjustbox}{max width=\textwidth}
			\begin{tabular}{l cccc}
				\toprule			
				\multirow{2}{*}{Methods} & 
				\multicolumn{2}{c}{Kernel approximation} &
				\multirow{2}{*}{Storage cost} &
				\multirow{2}{*}{Computational complexity}\\
				& Near the kernel & Remainder \\
				\hline
				Cholesky factorization & Exact & Exact & $\calO\br{n^2}$ & $\calO\br{n^3}$\\
				Hybrid  & Exact & Step function & $\calO\br{n}$ & $\calO\br{n\log n}$\\
				Multifactor & SOE & SOE & $\calO\br{n}$ & $\calO\br{Nn}$ \\
				Hybrid multifactor &  Exact & SOE & $\calO\br{n}$ & $\calO\br{Nn}$\\
				\bottomrule
			\end{tabular}
		\end{adjustbox}
		
		\caption{Comparison between the Monte Carlo-based schemes for the rBergomi model, where $n$ is the number of time steps and $N$ denotes the number of summation terms of exponentials.}
		\label{table: MC method}
	\end{table}

	Beyond pricing challenges, the calibration of the rBergomi model to market data introduces additional complexities. The standard calibration framework involves solving the following optimization problem:
	\begin{align}\label{eq: calibration}
		\inf_{\bdtheta \in \bdTheta}\frac{1}{M}\sum_{j = 1}^M\delta\br{P(\bdtheta; \zeta_j), P^{MKT}(\zeta_j)},
	\end{align}	
	where $\boldsymbol{\Theta}$ is a suitable parametric space, $\delta$ is the error metric, $\hua{\zeta_j}_{j=1}^M = \hua{(K_j, T_j)}_{j=1}^M$ is the set of contract parameters and $P(\bmtheta; \zeta_j)$ is the price of a European option under rBergomi model:
	\begin{equation}\label{eq: option price}
		P(\bmtheta; \zeta_j) =\mathbb{E}[\mathrm{e}^{-rT_j}h(S_{T_j}(\bmtheta))],
	\end{equation}
	with the payoff function $h(\cdot) = (\cdot - K_j)^+$ for the call option and $h(\cdot) = (K_j - \cdot)^+$ for the put. $P^{MKT}(\zeta_j)$ denotes the market option price. This formulation raises a critical issue: how to choose the error metric $\delta$. The traditional calibration usually uses mean squared error (MSE) and solves
	\begin{align}
		\inf_{\bdtheta \in \bdTheta}\frac{1}{M}\sum_{j = 1}^M\br{P(\bdtheta; \zeta_j)- P^{MKT}(\zeta_j)}^2.	
	\end{align} 
	This approach suffers from one fundamental limitation: it may overfit to the selected data points and perform poorly on tails of the underlying distribution that are outside of observed strikes. Furthermore, model calibration is not necessarily a convex optimization problem and often leads to multiple local minima. Liu et al. \cite{liu2019neural} presented the landscape of MSE for implied volatilities of the Heston model, with multiple local minima contained. An elaborate choice of error metric could help to mitigate this issue. In a broader context, Kidger et al. \cite{kidger2021neural} used Wasserstein-based ideas in the learning of stochastic differential equations, suggesting that distributional objectives can provide a useful alternative to pointwise fitting.

	\textbf{Our contributions}. The main contributions of this work are as follows. 
	
	First, we develop a modified sum-of-exponentials (mSOE) scheme for the rBergomi model within the class of hybrid multifactor approximations and introduce two calibration-oriented modifications. We keep the kernel exact only on the first time step, which is sufficient to preserve the local singular behavior while keeping the construction simple, and we simulate the associated Gaussian factors exactly rather than using an Euler-type discretization. We do not claim a new general kernel-approximation theory or a new hybrid principle, but combine the SOE approximation framework of Jiang et al. \cite{jiang2017fast} with this first-step exact treatment and exact Gaussian simulation. The aim here is to provide a sufficiently accurate, stable, and differentiable pricing engine for repeated use in calibration. We include a weak error analysis that connects pricing errors to the $L^\infty$-error of kernel approximation. Furthermore, we also numerically compare the proposed method with both the standard SOE method and the hybrid schemes to clarify its empirical position relative to prior approaches.
	
	Second, building on this pricing engine, we formulate a calibration approach based on distributional matching via the Wasserstein-1 distance, motivated by arguments in \cite[Section 3]{delise2021neural}. While Wasserstein-based objectives are well established in the broader literature on generative modeling and SDE-learning, to the best of our knowledge, they have not been systematically studied as calibration losses for the rBergomi model. In our setting, the calibration target is the maturity-wise risk-neutral distribution of the terminal underlying asset price. Through the Kantorovich-Rubinstein duality, the resulting objective can be interpreted as controlling pricing errors over all 1-Lipschitz payoffs rather than only over a finite set of quoted options. The emphasis here is not on a new optimal transport methodology, but on a computationally implementable adaptation of distributional calibration to the rBergomi setting.
	
	Finally, our numerical study is designed to assess whether this combination of an efficient simulation scheme and a Wasserstein-based objective improves the calibration performance within the rBergomi model, compared with conventional MSE-based fitting. We do not attempt to rank the rBergomi model against alternative stochastic models. Rather, we focus on comparisons of calibration objectives within the same model class, and examine parameter recovery, optimization behaviour, and generalization performance.
	
	The rest of the paper is organized as follows. Section \ref{sec: rBergomi simulation} details the mSOE simulation scheme and its theoretical foundations. Section \ref{sec: calibration} presents the complete calibration methodology. We illustrate in Section \ref{sec: experiment} the numerical performance of the mSOE method and the calibration scheme. Finally, Section \ref{sec: conclusion} concludes with directions for future research. Example code for the mSOE scheme and the calibration procedure is available in the GitHub repository {\url{https://github.com/evergreen1002/Calibration-rBergomi-Wasserstein-1}}.

	\section{Simulation of the rBergomi model}\label{sec: rBergomi simulation}
	In this section, we develop the modified Sum-of-Exponentials (mSOE) scheme to efficiently simulate the rBergomi model defined by \eqref{eq: rBergomi} and \eqref{eq: spot variance}. Throughout the paper, we consider an equidistant temporal grid $0 = t_0 < t_1 < \cdots < t_n = T$ with step size $\tau:= T/n$ and $t_i:= i\tau$.
	
	\subsection{Kernel approximation}
	The simulation of the Volterra process $I_t$ in \eqref{eq: Volterra process} leads to the primary computational challenge. To address this, we approximate the fractional kernel $K(t) = t^{H-1/2}, H \in (0, 1/2)$ by a sum of exponentials (SOE). We emphasize that this approximation strategy is not introduced in the present paper but is based on the constructive SOE approximation theory for power-law kernels developed in \cite{jiang2017fast}, which provides explicit uniform approximation error bounds away from the origin.
    
    To connect our setting with that framework, we begin with the Laplace transform of the power-law kernel $K(t) = t^{H-1/2}$, which is completely monotone \footnote{A function $g\br{x}$ is completely monotone if it satisfies $(-1)^kg^{(k)}(x) \geq 0$ for all $x >0$ and $k \in\mathbb{N}_{\geq 0}$.}. By Bernstein's theorem \cite{Widder:1941}, it can be expressed as:
	\begin{equation}\label{eq: Bernstein thm}
		K(t) = \frac{1}{\Gamma(1/2-H)}\int_0^\infty \mathrm{e}^{-xt}x^{-H-\frac{1}{2}}\de x=\colon\int_0^\infty \mathrm{e}^{-xt}\mu(\de x), 
	\end{equation}
	where $\mu(\de x) = w(x)\de x$, with $w(x) = \frac{1}{\Gamma(1/2-H)}x^{-H-\frac{1}{2}}$. This establishes that $K(t)$ is an infinite weighted mixture of exponentials. Applying \eqref{eq: Bernstein thm} to the Volterra process $I_t$ and invoking the stochastic Fubini theorem yields:
	\begin{align*}
		I_t = \sqrt{2H}\int_0^\infty\int_0^t \mathrm{e}^{-x(t-s)}\de W_s \mu(\de x) =\sqrt{2H}\int_0^\infty Y_t^x \mu(\de x),
	\end{align*}
	where for any $x\geq 0$, $Y^x$ is an Ornstein-Uhlenbeck process satisfying $\de Y_t^{x} = -x Y_t^{x}\de t + \de W_t$ with $Y_0^x = 0$. This formulation suggests approximating $I_t$ by discretizing the integral and using a finite sum of exponentials of the kernel:
	\begin{align}\label{eq: SOE}
		K^N(t) := \sum_{k = 1}^N \omega_k \e^{-\lambda_kt}, 
	\end{align}
	for non-negative nodes $\hua{\lambda_k}_{k=1}^N$ and weights $\hua{\omega_k}_{k=1}^N$. The relevant convergence result for this approximation is recalled below. 
		
	\begin{proposition}[{\cite[Theorem 2.1]{jiang2017fast}}]\label{prop: kernel approximation}
	For any $0 < \tau \leq T$ and any desired precision $\varepsilon > 0$, there exists a sum-of-exponentials approximation of the form
	\begin{align*}
		K^N(t) = \sum_{k=1}^N \omega_k \e^{-\lambda_k t}, \quad \omega_k, \lambda_k > 0, 
	\end{align*}
	such that 
	\begin{align*}
		\abs{K(t) - K^N(t)} \leq \varepsilon,\quad t \in [\tau, T]. 
	\end{align*}
	Moreover, the number $N$ of exponentials required to achieve this accuracy satisfies 
	\begin{align*}
		N = \calO\br{\log\frac{1}{\varepsilon}\br{\log\log\frac{1}{\varepsilon} + \log\frac{T}{\tau}} + \log\frac{1}{\tau}\br{\log\log\frac{1}{\varepsilon} + \log\frac{1}{\tau}}}.
	\end{align*}	
	\end{proposition}
Proposition \ref{prop: kernel approximation} justifies approximating the kernel by a finite exponential sum on $[\tau, T]$, while leaving the singular contribution near the origin to be treated separately. This is precisely the rationale behind our modified construction below.

Several quadrature-based constructions for power-law kernels have been proposed in the literature. As $K$ only appears inside the integrals of $I_t$, one could expect that the kernel approximation is valid in some $L^p([0, T])$ sense for $p \geq 1$. For instance, Bayer and Breneis \cite{bayer2023markovian} used Gaussian quadrature to achieve a sub-polynomial convergence rate of the $L^2$-error. Due to the better integrability of $K$ at $t = 0$ than $K^2$, they later proposed in \cite{bayer2023weak} to minimize the $L^1$-error with a faster convergence rate. However, for the calibration problems considered in the present paper, it is important that the kernel approximation permits reliable computation of gradients with respect to the Hurst index $H$. This motivates us to adopt the SOE framework in \cite{jiang2017fast}, which combines Gauss-Jacobi and Gauss-Legendre quadrature on geometric-spacing intervals. Besides providing explicit uniform approximation control on $[\tau, T]$, this construction is well-suited to stable implementation and differentiation with respect to $H$.

To further improve the accuracy, we avoid approximating the singularity at the origin by exponentials. Instead, we keep the kernel exact on the first interval and approximate only the remainder by SOE. This follows the same principle as in existing hybrid constructions \cite{bennedsen2017hybrid, romer2022hybrid}, where the local singular contribution is separated from the history part. Moreover, there is no discernible accuracy improvement to treat the kernel exactly for more than one time step, as shown in \cite{bennedsen2017hybrid, romer2022hybrid}. Accordingly, our contribution is to tailor this principle to the present calibration setting by combining a first-step exact treatment with the SOE approximation framework of \cite{jiang2017fast} and an exact Gaussian simulation of the resulting multifactor structure. As a consequence, the history terms can be updated recursively without introducing spurious cost growth. This feature helps to improve the numerical robustness of the proposed mSOE scheme.  We therefore define our modified kernel approximation as 
	\begin{equation}\label{eq: mSOE for K}
		\hat{K}(t) :=\left\{
		\begin{aligned}
			&t^{H - \frac{1}{2}}  &&t \in [t_0, t_1),\\
			&\sum_{k =1}^{N}\omega_k\mathrm{e}^{-\lambda_kt} && t \in [t_1, t_n].
		\end{aligned}
		\right.  
	\end{equation}
We stress again that the present paper does not derive a new general convergence theory for SOE approximations, nor does it propose a new hybrid principle. Instead, it combines the existing SOE approximation framework in \cite{jiang2017fast} with one-step exact treatment near the origin. The purpose of this construction is to obtain a sufficiently accurate, stable, and differentiable pricing engine that is repeatedly used in the calibration loop. The numerical comparisons in Section \ref{sec: experiment} are included to precisely clarify the empirical advantage of the construction relative to standard SOE and prior hybrid benchmarks.

\subsection{Weak error bound}
	In this subsection, we derive a weak error bound in terms of the kernel approximation error. Without loss of generality, we assume interest rate $r = 0$ and $\xi_0(t)$ is bounded on $[0, T]$, and bounded away from 0. We start by rewriting \eqref{eq: rBergomi}-\eqref{eq: spot variance} in the form
	\begin{equation}\label{eq: rBergomi 2}
		\begin{aligned}
			S_t &= s_0 + \int_0^tS_s\exp\br{X_s}\de Z_s,\quad Z_s = \rho W_s + \sqrt{1 - \rho^2}W_s^\perp,\\
			X_t &= \frac{1}{2}\log V_t  = \frac{1}{2}\br{\log\xi_0(t) + \eta I_t - \frac{\eta^2}{2}\bbE[I_t^2]}.
		\end{aligned}
	\end{equation}
	Let $\hat{S}_t$, $\hat{X}_t$ and $\hat{I}_t$ denote the corresponding processes in \eqref{eq: rBergomi 2} with $K$ replaced by $\hat{K}$. Observe that the term $t^{2H}$ in \eqref{eq: spot variance} is precisely the variance of the Volterra process $I_t$. So in the numerical scheme, we replace it by the square of the $L^2$-norm of its mSOE approximation. The following lemma, which is adapted from \cite{harms2019strong}, bounds the weak error induced by replacing $X$ with $\hat{X}$.
	\begin{lemma}[{\cite[Lemma 3]{harms2019strong}}]\label{lemma: weak error}
		Let $S, \hat{S}, X, \hat{X}: [0, T] \times \Omega \to \bbR$ be continuous stochastic processes satisfying
		\begin{align*}
			S_t = s_0 + \int_0^tS_s\exp\br{X_s}\de Z_s,\quad \hat{S}_t = s_0 + \int_0^t\hat{S}_s\exp(\hat{X}_s)\de Z_s,
		\end{align*}
		and let $f: (0, \infty) \to \bbR$ be a function such that $f\circ \exp$ is Lipschitz continuous with Lipschitz constant $L$, then there holds
		\begin{equation}\label{eq: weak error 1}
			\begin{aligned}
				&\quad\abs{\bbE\fa{f(S_T)} - \bbE[f(\hat{S}_T)]}\\ 
				&\leq L(\sqrt{T} + 6)\br{\bbE \sup_{t \in [0, T]}\br{\exp\br{2\abs{X_t}} + \exp\br{2|\hat{X}_t|}}^2}^{1/2}\br{\int_0^T \bbE[|X_t - \hat{X}_t|^2] \de t}^{1/2}. 
			\end{aligned}
		\end{equation}
	\end{lemma}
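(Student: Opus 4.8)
The plan is to exploit that both $S$ and $\hat S$ solve linear (geometric) SDEs, so they admit closed-form exponential solutions, and that the hypothesis on $f\circ\exp$ converts the weak error of the prices into an $L^1$-distance between the log-prices. First I would observe that $Z_s=\rho W_s+\sqrt{1-\rho^2}W_s^\perp$ is a standard Brownian motion (since $\langle Z\rangle_s=s$) and apply It\^o's formula to the two linear SDEs to obtain
\[
\log S_t=\log s_0+\int_0^t\br{r-\tfrac12 e^{2X_s}}\de s+\int_0^t e^{X_s}\de Z_s,
\]
together with the analogous identity for $\log\hat S_t$ with $\hat X$ replacing $X$; in particular $S_T,\hat S_T\in(0,\infty)$ almost surely. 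Subtracting, the $\log s_0$ and the $r$ contributions cancel, leaving $\log S_T-\log\hat S_T=A_T+M_T$ with $A_T:=-\tfrac12\int_0^T(e^{2X_s}-e^{2\hat X_s})\de s$ and $M_T:=\int_0^T(e^{X_s}-e^{\hat X_s})\de Z_s$. Since $f(y)=(f\circ\exp)(\log y)$ and $f\circ\exp$ is $L$-Lipschitz, $\abs{f(S_T)-f(\hat S_T)}\le L\abs{\log S_T-\log\hat S_T}\le L(\abs{A_T}+\abs{M_T})$, so it suffices to bound $\bbE\abs{A_T}$ and $\bbE\abs{M_T}$ (and if the right-hand side of \eqref{eq: weak error 1} is infinite there is nothing to prove).

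For the drift term I would use the elementary estimate $\abs{e^{2a}-e^{2b}}\le 2\abs{a-b}\,e^{2\max(\abs a,\abs b)}\le 2\abs{a-b}(e^{2\abs a}+e^{2\abs b})$, so that $\abs{A_T}\le\Phi\int_0^T\abs{X_s-\hat X_s}\de s$ with $\Phi:=\sup_{t\in[0,T]}(e^{2\abs{X_t}}+e^{2\abs{\hat X_t}})$. Taking expectations and applying Cauchy--Schwarz first in $\omega$ (pointwise in $s$) and then in $s\in[0,T]$ gives $\bbE\abs{A_T}\le\sqrt T\,(\bbE\Phi^2)^{1/2}\br{\int_0^T\bbE\abs{X_s-\hat X_s}^2\de s}^{1/2}$, which accounts for the $\sqrt T$ in the bound.

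The martingale term is the crux. I would deliberately avoid the It\^o isometry at time $T$: after $\abs{e^{X_s}-e^{\hat X_s}}\le\abs{X_s-\hat X_s}(e^{\abs{X_s}}+e^{\abs{\hat X_s}})$ the exponential factor is unbounded and cannot be detached from $\abs{X_s-\hat X_s}^2$ under the expectation without raising the moment order to four (and the subsequent $(\int_0^T\cdot\,\de s)^2\le T\int_0^T\cdot^2\,\de s$ would only make matters worse). Instead I would apply the Burkholder--Davis--Gundy inequality to the continuous local martingale $M$ (which requires no a priori integrability), bound $\langle M\rangle_T^{1/2}=\br{\int_0^T(e^{X_s}-e^{\hat X_s})^2\de s}^{1/2}$ by pulling the pathwise supremum $\sup_{t\le T}(e^{\abs{X_t}}+e^{\abs{\hat X_t}})$ out of the time integral, and only afterwards apply Cauchy--Schwarz in $\omega$ together with $(e^{\abs a}+e^{\abs b})^2\le 2(e^{2\abs a}+e^{2\abs b})\le 2(e^{2\abs a}+e^{2\abs b})^2$. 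This produces $\bbE\abs{M_T}\le C\,(\bbE\Phi^2)^{1/2}\br{\int_0^T\bbE\abs{X_s-\hat X_s}^2\de s}^{1/2}$ with $C$ a fixed constant; using the $L^1$-BDG constant $3$ one gets $C=3\sqrt2<6$.

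Putting the two estimates together via $\abs{f(S_T)-f(\hat S_T)}\le L(\abs{A_T}+\abs{M_T})$ yields \eqref{eq: weak error 1}. The step I expect to be the main obstacle is the martingale estimate: one must route it through BDG rather than the It\^o isometry precisely so that the unbounded exponential prefactor can be pulled out under the supremum \emph{before} the last Cauchy--Schwarz, and then keep careful track of numerical constants so that everything fits under the clean bound $\sqrt T+6$. Everything else --- the exponential representation, the reduction through $f\circ\exp$, and the mean value bounds on the exponentials --- is routine.
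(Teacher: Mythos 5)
The paper does not prove this lemma itself --- it cites it from a reference --- so the comparison is against the natural proof of the cited result. Your strategy is the right one and is essentially the argument the reference carries out: reduce via the $f\circ\exp$ hypothesis to controlling $\bbE\abs{\log S_T-\log\hat S_T}$, split the log-price difference into a Lebesgue drift part and a stochastic-integral martingale part, control each via the mean-value bound $\abs{e^{a}-e^{b}}\le\abs{a-b}\,(e^{\abs a}+e^{\abs b})$, Cauchy--Schwarz for the drift (yielding $\sqrt T$), and BDG --- not the It\^o isometry --- for the martingale (yielding the additive constant). Your observation that the isometry route dead-ends because it forces fourth moments of $X-\hat X$ is exactly the right reason to reach for BDG, and you correctly note that the BDG upper bound needs no a priori integrability of the local martingale. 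The reduction through $\log$ and the reason the hypothesis is stated on $f\circ\exp$ rather than $f$ (second moments of $S_T$ explode in rBergomi) is also the point of the lemma and you use it correctly.

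The one place where your account is loose is the constant. You pass to $\Phi:=\sup_{t\le T}\bigl(e^{2\abs{X_t}}+e^{2\abs{\hat X_t}}\bigr)$ via the chain $\bigl(e^{\abs a}+e^{\abs b}\bigr)^2\le 2\bigl(e^{2\abs a}+e^{2\abs b}\bigr)\le 2\bigl(e^{2\abs a}+e^{2\abs b}\bigr)^2$, which injects an extra $\sqrt 2$ and forces you to invoke an $L^1$-BDG constant of $3$ to land under $6$; that value is not the standard one (the pathwise Beiglb\"ock--Siorpaes bound gives $6$, and the classical Davis constants are not obviously $\le 3\sqrt 2$). The $\sqrt2$ is avoidable: since $e^{x}\le e^{2x}$ for $x\ge0$, one has pointwise
\begin{equation*}
e^{\abs{X_s}}+e^{\abs{\hat X_s}}\ \le\ e^{2\abs{X_s}}+e^{2\abs{\hat X_s}}\ \le\ \Phi,
\end{equation*}
so $\langle M\rangle_T^{1/2}\le\Phi\bigl(\int_0^T\abs{X_s-\hat X_s}^2\,\de s\bigr)^{1/2}$ without any multiplicative loss, and then Cauchy--Schwarz in $\omega$ followed by the pathwise BDG constant $6$ gives
\begin{equation*}
\bbE\abs{M_T}\ \le\ 6\,\bigl(\bbE\,\Phi^2\bigr)^{1/2}\Bigl(\int_0^T\bbE\abs{X_s-\hat X_s}^2\,\de s\Bigr)^{1/2},
\end{equation*}
which together with your drift estimate reproduces $L(\sqrt T+6)$ exactly. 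Everything else in your sketch is sound.
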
	

	The weak-error bound derived below is used in this paper for European payoffs. In particular, it applies directly to put options, while the corresponding result for calls follows from put-call parity; see Remark \ref{remark: payoff}. For the sake of completeness, and to rigorously justify the applicability of the above lemma to the present rBergomi setting under the mSOE approximation, we verify below that the finite exponential moment condition is indeed satisfied. 
     \begin{proposition}
     Under the mSOE approximation in \eqref{eq: mSOE for K}, the processes $X$ and $\hat{X}$ satisfy
     \begin{align*}
     	\E\fa{\sup_{t \in [0,T]}\br{\exp(2\abs{X_t}) + \exp(2|\hat{X}_t|)}^2} < \infty. 
     \end{align*}
     \end{proposition}
     \begin{proof}
		We first note that 
		\begin{align*}
		 \sup_{t\in [0, T]}\br{\exp(2\abs{X_t}) + \exp(2|\hat{X}_t|)}^2 \leq 2\sup_{t \in [0, T]}\exp(4\abs{X_t}) + 2\sup_{t \in [0, T]}\exp(4|\hat{X}_t|).
		 \end{align*}
		Define 
		\begin{align*}
	d(t) := \frac{1}{2}\br{\log \xi_0(t) - \frac{\eta^2}{2}\E[I_t ^2]},\quad \hat{d}(t) := \frac{1}{2}\br{\log \xi_0(t) - \frac{\eta^2}{2}\E[\hat{I}_t ^2]}.
	\end{align*}
	Since $\log \xi_0$ is bounded on $[0, T]$, and both $\E[I_t^2]$ and $\E[\hat{I}_t^2]$ are uniformly bounded in $t$, there exist finite constants
\begin{align*}
D := \sup_{t \in [0, T]} \abs{d(t)} < \infty,\quad \hat{D} := \sup_{t \in [0, T]} \abs{\hat{d}(t)} < \infty. 			
\end{align*}
Moreover, we derive 
\begin{align*}
\sup_{t\in [0, T]}\exp(4\abs{X_t})\leq \e^{4D}\exp(2\abs{\eta}\sup_{t \in [0, T]}\abs{I_t}), \quad \sup_{t\in [0, T]}\exp(4\abs{\hat{X}_t}) \leq \e^{4\hat{D}}\exp(2\abs{\eta}\sup_{t \in [0, T]}\abs{\hat{I}_t}). 
\end{align*}
Since $I$ is a centered continuous Gaussian process of Riemann-Liouville fractional Brownian type on $[0, T]$, standard Gaussian-process integrability results imply that
\begin{align*}
\E\fa{\exp(2\abs{\eta}\sup_{t\in [0,T]}\abs{I_t})} < \infty. 
\end{align*}
It therefore remains to verify the corresponding property for $\hat{I}$. 	
For $0 \leq s \leq t \leq T$, the mSOE approximation yields
\begin{align*}
\hat{I}_t - \hat{I}_s = \sqrt{2H}\int_0^s (\hat{K}(t-u) - \hat{K}(s-u))\de W_u + \sqrt{2H}\int_s^t \hat{K}(t-u)\de W_u.			
\end{align*}
Then the It\^o's isometry implies
\begin{align*}
\E\left[\abs{\hat{I}_t - \hat{I}_s}^2\right] = 2H\int_0^s \abs{\hat{K}(t-u) - \hat{K}(s-u)}^2 \de u + 2H\int_s^t \abs{\hat{K}(t-u)}^2\de u. 
\end{align*}
Since $\hat{K} \in C([\tau, T])$, the second term tends to zero as $t \to s$ and the first term also tends to zero by the continuity of translations in $L^2(0, T)$. Therefore, $\hat{I}$ is mean-square continuous, and thus admits a continuous modification, which we still denote by $\hat{I}$. As $\hat{I}$ is a centered continuous Gaussian process on $[0, T]$, Fernique's theorem \cite[Theorem 2.8.5]{bogachev1998gaussian} implies that there exists $a > 0$ such that 
\begin{align*}
\E\fa{\exp(a \sup_{t \in [0, T]} \abs{\hat{I}_t}^2)} < \infty. 			
\end{align*}
For any $c, x > 0$, the inequality $cx \leq ax^2 + c^2/4a$ implies $\exp(cx) \leq \exp(ax^2)\exp(c^2/4a)$. Applying this with $c = 2\abs{\eta}, x = \sup_{t \in [0, T]} \abs{\hat{I}_t}$, we have 
		\begin{align*}
			\E\fa{\exp(2\abs{\eta} \sup_{t \in [0, T]}\abs{\hat{I}_t})} \leq \exp(\frac{\eta^2}{a})\E\fa{\exp(a \sup_{t \in [0, T]}\abs{\hat{I}_t}^2)} < \infty.
		\end{align*}		
		We complete the proof by combining the above bounds.	
	\end{proof}	
	\begin{remark}\label{remark: payoff}
		As observed in \cite[Remark 2]{harms2019strong}, the payoff of the European put option $f_p(\cdot) = (K - \cdot)^+$ satisfies the assumption in Lemma \ref{lemma: weak error} that $f \circ \exp$ is Lipschitz continuous with Lipschitz constant $K$. For call option with payoff $f_c(\cdot) = (\cdot - K)^+$, the same conclusion follows from put-call parity: 
		\begin{align*}
			\E[f_c(S_T)] = \E[f_p(S_T)] + S_0 - K,\quad \E[f_c(\hat{S}_T)] = \E[f_p(\hat{S}_T)] + \hat{S}_0 - K.
		\end{align*}
		Since $S_0 = \hat{S}_0 = s_0$ and subtraction gives 
		\begin{align*}
			\E[f_c(S_T)] - \E[f_c(\hat{S}_T)] = \E[f_p(S_T)] - \E[f_p(\hat{S}_T)]. 
		\end{align*}
		Therefore, the above lemma applies to both European call and put options.
 	\end{remark}
	
	We now derive the weak error between $S_T$ and $\hat{S}_T$ induced by approximating $K(t)$ with $\hat{K}(t)$.
	\begin{corollary}\label{cor: 1}
		Let $f: (0, \infty) \to \bbR$ be such that $f\circ \exp$ is Lipschitz continuous with Lipschitz constant $L$, then 
		\begin{align}\label{eq: weak error 2}
			\abs{\bbE\fa{f(S_T)} - \bbE[f(\hat{S}_T)]} \leq C\br{\int_{t_1}^T \abs{K(s) - \hat{K}(s)}^2 \de s}^{1/2}.
		\end{align}
		Here, the constant $C$ depends on $L$, $T$, $\xi_0(t)$, $H$ and $\eta$. 
	\end{corollary}
	\begin{proof}
		From the definition of $X_t$ in $\eqref{eq: rBergomi 2}$, we have 
		\begin{align*}
			X_t - \hat{X}_t = \frac{\eta}{2}(I_t - \hat{I}_t) - \frac{\eta^2}{4}\bbE[I_t^2 - \hat{I}_t^2].
		\end{align*}
		Writing 
		\begin{align*}
		A := \bbE[I_t^2 - \hat{I}_t^2], 
		\end{align*}
		and taking the second moments, we obtain
		\begin{align*}
			\bbE[|X_t - \hat{X}_t|^2] = \frac{\eta^2}{4}\bbE[|I_t - \hat{I}_t|^2] + \frac{\eta^4}{16}A^2.
		\end{align*}
		We estimate the two terms on the right-hand side separately. For the first term, It\^o isometry for $t \in (\tau, T]$ gives, for $t \in [\tau, T]$,
		\begin{align*}
			\mathbb{E}[|{I_t - \hat{I}_t}|^2] = 2H\int_{0}^{t -\tau}|{K(t-s) - \hat{K}(t-s)}|^2\de s\nonumber
			\leq  2H\int_{\tau}^{T}|{K(s) - \hat{K}(s)}|^2\de s. 
		\end{align*}
		For the second term, we write 
		\begin{align*}
			A = 2H\int_0^t \br{K(t-s) + \hat{K}(t-s)}(K(t-s) - \hat{K}(t-s))\de s.
		\end{align*}
		Hence, by the Cauchy-Schwarz inequality,
		\begin{align*}
			A^2 \leq 4H^2\br{\int_0^t \abs{K(t-s) + \hat{K}(t-s)}^2 \de s}\br{\int_0^t \abs{K(t-s) - \hat{K}(t-s)}^2 \de s}.
		\end{align*}
		Define 
		\begin{align}\label{eq: M}
			M := \sup_{t \in [0, T]}\int_0^t \abs{K(t-s) + \hat{K}(t-s)}^2 \de s. 		
		\end{align}
		Then 
		\begin{align*}
			A^2 \leq 4MH^2\int_\tau^T \abs{K(s) - \hat{K}(s)}^2 \de s. 
		\end{align*}
		Consequently, 		 
		\begin{align*}
			\bbE[|X_t - \hat{X}_t|^2] \leq \br{\frac{1}{2}\eta^2H + \frac{1}{4}MH^2\eta^4}\int_\tau^T|K(s) - \hat{K}(s)|^2 \de s.
		\end{align*}
		Integrating over $[0, T]$ and applying Lemma \ref{lemma: weak error} yields the desired estimate.
	\end{proof}

	\begin{corollary}\label{cor: 2}
		Suppose that the kernel approximation $\hat{K}$ in \eqref{eq: mSOE for K} satisfies
		\begin{align}\label{eq: tolerance}
			\sup_{t\in [\tau, T]} \abs{K(t) - \hat{K}(t)} \leq \varepsilon, 
		\end{align}
		for some $\tau \in (0, T)$ and $\varepsilon > 0$. Then for each fixed $H \in (0, 1/2)$, the weak error satisfies		
		\begin{align*}
			\abs{\bbE\fa{f(S_T)} - \bbE[f(\hat{S}_T)]} \leq C\sqrt{T-\tau}\varepsilon,	
		\end{align*}
		where the constant $C$ depends on $L$, $T$, $\xi_0(t)$, $H$ and $\eta$.
		\begin{proof}
			By Lemma \ref{lemma: weak error} and Corollary \ref{cor: 1}, we have 
			\begin{align*}
				&\abs{\bbE\fa{f(S_T)} - \bbE[f(\hat{S}_T)]} \\
				\leq & L(\sqrt{T} + 6)\br{\bbE \sup_{t \in [0, T]}\br{\e^{2\abs{X_t}} + \e^{2\abs{\hat{X}_t}}}^2}^{1/2}\sqrt{(\frac{1}{2}\eta^2H + \frac{1}{4}MH^2\eta^4)T}\br{\int_\tau^T|K(s) - \hat{K}(s)|^2 \de s}^{1/2},
			\end{align*}
			where $M$ is defined in \eqref{eq: M}. By the assumption \eqref{eq: tolerance}, we obtain 
			\begin{align*}
				\br{\int_\tau^T|K(s) - \hat{K}(s)|^2 \de s}^{1/2} \leq \sqrt{T-\tau}\varepsilon. 
			\end{align*}
			It remains to control the quantity $M$. Using the inequality $(a+b)^2 \leq 2a^2 + 2b^2$, $\abs{\hat{K}} \leq \abs{K} + \varepsilon$, and for sufficient accurate approximation with $\varepsilon < 1$, we derive that
			\begin{align*}
				M \leq \frac{3}{H}T^{2H} + 4T.
			\end{align*}
			Substituting this estimate into the previous inequality completes the proof.
		\end{proof}
	\end{corollary}	

\begin{remark}
	We do not claim uniformity of the hidden constants with respect to \(H\in(0,1/2)\). For each fixed $H \in (0, 1/2)$, the constant $C$ in Corollary \ref{cor: 2} is finite. Moreover, from the above proof, one could observe that the only explicit potentially singular contribution arises from $M$. Nevertheless, the weak error bound involves $M$ only through the product $MH^2$, which is of order $\calO(H)$, and no blow-up is generated by this contribution in the rough regime.
\end{remark}

	\subsection{mSOE scheme}
	We now present the complete mSOE scheme for simulating $(S_{t_i}, V_{t_i})$ on the temporal grid. The Volterra process is decomposed as
	a summation of the local part and the historical part 
	\begin{equation*}
		\begin{aligned}
			I_{t_{i+1}} &=\underbrace{\sqrt{2H}\int_{t_i}^{t_{i+1}}(t_{i+1} - s)^{H - \frac{1}{2}}\de W_s}_{I_{\mathcal{N}}(t_{i+1})} + \underbrace{\sqrt{2H}\int_{0}^{t_i}(t_{i+1} - s)^{H - \frac{1}{2}}\de W_s}_{I_{\mathcal{F}}(t_{i+1})}.		
		\end{aligned}
	\end{equation*}
	Consistent with our kernel approximation $\hat{K}$, the local part $I_{\mathcal{N}}(t_{i+1})\sim \mathcal{N}(0, \tau^{2H})$ is simulated exactly, and the historical part $I_{\mathcal{F}}(t_{i+1})$ is approximated via
	\begin{equation*}
		\bar{I}_{\mathcal{F}}(t_{i+1}) = \sqrt{2H}\sum_{k=1}^{N}\omega_k\int_{0}^{t_i}\mathrm{e}^{-\lambda_k(t_{i+1} - s)}\de W_s
		=\colon \sqrt{2H}\sum_{k=1}^{N}\omega_k\bar{I}_{\mathcal{F}}^k(t_{i+1}),
	\end{equation*}
	where the $k$th historical component $\bar{I}_{\calF}^k$ follows the recurrence:
	\begin{equation*}		
		\bar{I}_{\mathcal{F}}^k(t_{i+1}) =
		\mathrm{e}^{-\lambda_k\tau}\br{\bar{I}_{\mathcal{F}}^k(t_{i}) + \int_{t_{i-1}}^{t_i}\mathrm{e}^{-\lambda_k(t_i - s)}\de W_s},\quad i \geq 1,	
	\end{equation*}
	with initialization $\bar{I}_\calF^k(t_1) = 0$. The recurrence requires sampling a centered $(N+2)$-dimensional Gaussian vector at each step
	\begin{equation*}
		{\Xi}_i:=\br{\Delta W_{t_{i}}, \int_{t_{i-1}}^{t_i}\mathrm{e}^{-\lambda_1(t_i - s)}\de W_s, \cdots, \int_{t_{i-1}}^{t_i}\mathrm{e}^{-\lambda_{N}(t_i - s)}\de W_s, I_{\mathcal{N}}(t_{i})},
	\end{equation*}
	where $\Delta W_{t_{i}}:=W_{t_i} - W_{t_{i-1}}$. The covariance matrix $\Sigma$ of $\Xi_i$ has entries 
	\begin{equation*}
		\begin{aligned}
			&\Sigma_{1,1} = \tau,\quad\Sigma_{1, \ell} = \Sigma_{\ell, 1} = \frac{1}{\lambda_{\ell-1}}\br{1 - \mathrm{e}^{-\lambda_{\ell-1}\tau}},\quad\Sigma_{k, \ell} = \frac{1}{\lambda_{k-1}+\lambda_{\ell-1}}\br{1 - \mathrm{e}^{-(\lambda_{k-1}+\lambda_{\ell-1})\tau}},\\
			&\Sigma_{N+2, 1} = \Sigma_{1, N+2} = \frac{\sqrt{2H}}{H+1/2}\tau^{H+\frac{1}{2}},\\
			&\Sigma_{N+2, \ell} = \Sigma_{\ell, N+2} = \frac{\sqrt{2H}}{\lambda_{\ell-1}^{H + 1/2}}\gamma(H + \tfrac{1}{2}, \lambda_{\ell-1}\tau),\\
			&\Sigma_{N+2, N+2} = \tau^{2H},
		\end{aligned}
	\end{equation*}
	for $k, \ell = 2, \cdots, N + 1$, where $\gamma(\cdot, \cdot)$ refers to the lower incomplete gamma function. Since $\Sigma$ is time-independent, we only need to implement Cholesky decomposition once. As aforementioned, the term $\mathbb{E}[I_t^2] = t^{2H}$ in \eqref{eq: spot variance} is replaced by the second moment of its approximation
	\begin{align*}
		\bbE\fa{\bar{I}_{t_i}^2} &= \bbE\fa{(I_{\calN}(t_i) + \bar{I}_{\calF}(t_i))^2} 
		= \tau^{2H} + 2H\sum_{k, \ell=1}^N \frac{\omega_k\omega_\ell}{\lambda_k + \lambda_\ell}\br{\e^{-(\lambda_k + \lambda_\ell)\tau} - \e^{-(\lambda_k + \lambda_\ell)t_i}}.
	\end{align*} 
	Finally, the complete mSOE scheme proceeds as follows for $i = 0, \cdots, n-1$: 
	\begin{enumerate}
		\item Sample the Gaussian vector $\Xi_{i+1}$ using the precomputed Cholesky factor.
		\item Update historical components:
		\begin{align*}
			\bar{I}_{\mathcal{F}}^k(t_{i+1}) &= \mathrm{e}^{-\lambda_k\tau}\br{\bar{I}_{\mathcal{F}}^k(t_{i}) + \Xi_{i+1}^{(k+1)}},\quad k = 1, \cdots, N,
		\end{align*}
		where $\Xi_{i+1}^{(k+1)}$ denotes the $(k+1)$th entry of $\Xi_{i+1}$.		
		\item Compute the variance process:
		\begin{align*}
			\bar{V}_{t_{i+1}} = \xi_0(t_{i+1})\exp\br{\eta\br{\sqrt{2H}\sum_{k=1}^N\omega_k\bar{I}_\calF^k(t_{i+1}) + \Xi_{i+1}^{(N+2)}}- \frac{\eta^2}{2}\bbE\fa{\bar{I}_{t_{i+1}}^2}}.
		\end{align*}
		\item Update the asset price process by Euler-Maruyama:
		\begin{align*}
			\bar{S}_{t_{i+1}} &= \bar{S}_{t_i}\exp\br{(r - \frac{1}{2}\bar{V}_{t_i})\tau + \sqrt{\bar{V}_{t_{i}}}\br{\rho\Xi_{i+1}^{(1)} + \sqrt{1 - \rho^2}\Delta W_{t_{i+1}}^\perp}}.
		\end{align*}
		
	\end{enumerate}
	Computational complexity: The mSOE scheme requires $\mathcal{O}(N^3)$ offline cost for the Cholesky decomposition and $\mathcal{O}(Nn)$ online computation with $\mathcal{O}(N)$ storage per path.
	
	\section{Calibration scheme}\label{sec: calibration}
We now present a calibration framework for the rBergomi model based on distributional matching via the Wasserstein-1 distance. The primary calibration target is the risk-neutral distribution of the terminal underlying asset prices $S_T$ at each maturity. We begin by establishing the theoretical foundation of the Wasserstein distance and its relevance to financial model calibration.
	\subsection{Wasserstein distance}	
	\begin{definition}
		Let $(\calX, d)$ be a complete metric space. For any $p \geq 1$, the Wasserstein-$p$ distance between two Borel probability measures $\mu$ and $\nu$ on $\calX$ is defined by
		\begin{equation*}
			W_p(\mu, \nu) = \br{\inf\limits_{\gamma \in \Gamma(\mu, \nu)}\mathbb{E}_{(x,y)\sim \gamma}d^p(x, y)}^{1/p},
		\end{equation*}
		where $\Gamma(\mu, \nu)$ is the set of all Borel probability measures on $\calX \times \calX$ with marginals $\mu$ and $\nu$. 
	\end{definition}
	Intuitively, the Wasserstein distance measures the minimum work required to transform one probability distribution into another, where the cost of moving mass is proportional to the distance raised to the power $p$. We specialize to the case $p = 1$ with $\mathcal{X} = \bbR$ equipped with the Euclidean metric $d(x,y) = \abs{x-y}$. The following theorem shows that for real-valued random variables, the Wasserstein-1 distance admits a tractable representation that bypasses the need for solving the optimal transport problem directly.
	
	\begin{theorem}[{\cite[Theorem 2.1]{de20211}}]
		Let $X$ and $Y$ be real-valued random variables with cumulative distribution functions $F_X$ and $F_Y$, respectively. The Wasserstein-1 distance between $X$ and $Y$ equals the area between their distribution functions:
		\begin{align*}
			W_1(X, Y) = \int_\bbR \abs{F_X(u) - F_Y(u)}\de u = \int_{0}^1 \abs{F_X^{-1}(z) - F_Y^{-1}(z)}\de z,
		\end{align*}
		where the quantile function $F^{-1}: [0, 1] \to \bbR$ is defined by 
		\begin{align*}
			F^{-1}(z) := \inf\hua{u\in \bbR: F(u) \geq z}.
		\end{align*}
	\end{theorem}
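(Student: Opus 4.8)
The plan is a sandwich argument: bound $W_1(X,Y)$ below by $\int_\bbR\abs{F_X(u)-F_Y(u)}\,\de u$ using an \emph{arbitrary} coupling, bound it above by $\int_0^1\abs{F_X^{-1}(z)-F_Y^{-1}(z)}\,\de z$ using the monotone (quantile) coupling, and then show \emph{directly} that these two integrals coincide; the three relations together force equality throughout. Two elementary facts carry the argument. The first is the layer-cake identity $\abs{a-b}=\int_\bbR\abs{\mathbbm{1}\{a\le u\}-\mathbbm{1}\{b\le u\}}\,\de u$, valid for all $a,b\in\bbR$ (if $a\le b$ the integrand is $\mathbbm{1}\{a\le u<b\}$). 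The second is the Galois-type equivalence
\begin{align*}
	F^{-1}(z)\le u\iff z\le F(u),\qquad u\in\bbR,\ z\in(0,1),
\end{align*}
which follows from the $\inf$-definition of the quantile function together with the right-continuity and monotonicity of $F$.

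For the lower bound, let $\gamma$ be any joint law whose marginals are those of $X$ and $Y$. Applying the layer-cake identity pointwise to $\abs{X-Y}$, taking $\gamma$-expectations, and exchanging integral and expectation by Tonelli (the integrand is nonnegative) gives $\bbE_\gamma\abs{X-Y}=\int_\bbR\bbE_\gamma\abs{\mathbbm{1}\{X\le u\}-\mathbbm{1}\{Y\le u\}}\,\de u$. Since $\abs{\bbE_\gamma(\cdot)}\le\bbE_\gamma\abs{\cdot}$ and the marginals of $\gamma$ are prescribed, the inner expectation dominates $\abs{F_X(u)-F_Y(u)}$; taking the infimum over $\gamma$ yields $W_1(X,Y)\ge\int_\bbR\abs{F_X(u)-F_Y(u)}\,\de u$. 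For the upper bound, let $U$ be uniform on $(0,1)$ and set $X^\ast:=F_X^{-1}(U)$, $Y^\ast:=F_Y^{-1}(U)$. The Galois equivalence gives $\bbP(X^\ast\le u)=\bbP(U\le F_X(u))=F_X(u)$, so $X^\ast$ has the law of $X$, and likewise $Y^\ast$ that of $Y$; hence $(X^\ast,Y^\ast)$ is an admissible coupling and $W_1(X,Y)\le\bbE\abs{X^\ast-Y^\ast}=\int_0^1\abs{F_X^{-1}(z)-F_Y^{-1}(z)}\,\de z$.

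It remains to identify the two integrals. Apply the layer-cake identity with $a=F_X^{-1}(z)$, $b=F_Y^{-1}(z)$, integrate in $z$ over $(0,1)$, and swap the order of integration by Tonelli:
\begin{align*}
	\int_0^1\abs{F_X^{-1}(z)-F_Y^{-1}(z)}\,\de z=\int_\bbR\int_0^1\abs{\mathbbm{1}\{F_X^{-1}(z)\le u\}-\mathbbm{1}\{F_Y^{-1}(z)\le u\}}\,\de z\,\de u.
\end{align*}
By the Galois equivalence the indicators become $\mathbbm{1}\{z\le F_X(u)\}$ and $\mathbbm{1}\{z\le F_Y(u)\}$, and $\int_0^1\abs{\mathbbm{1}\{z\le\alpha\}-\mathbbm{1}\{z\le\beta\}}\,\de z=\abs{\alpha-\beta}$ for $\alpha,\beta\in[0,1]$, so the right-hand side collapses to $\int_\bbR\abs{F_X(u)-F_Y(u)}\,\de u$. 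Chaining this with the two bounds gives $\int_\bbR\abs{F_X-F_Y}\le W_1(X,Y)\le\int_0^1\abs{F_X^{-1}-F_Y^{-1}}=\int_\bbR\abs{F_X-F_Y}$, hence equality everywhere; if the common value is $+\infty$, all three quantities are simultaneously infinite and the identity still holds in $[0,\infty]$.

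The step I expect to need the most care is the verification that the monotone coupling is genuinely optimal, which splits into two places: showing that $F_X^{-1}(U)$ really has distribution $F_X$ — here the $\inf$-convention for the quantile function, right-continuity of $F$, and the behaviour of $F$ at atoms and on its flat stretches must mesh correctly, all of which is packaged into the Galois equivalence — and the (easier, but not automatic) lower bound, where the triangle inequality for the $\gamma$-expectation is applied \emph{inside} the $u$-integral, which requires the Tonelli interchange to be justified first. Everything else is bookkeeping with nonnegative integrands, so no integrability hypothesis on $X$ or $Y$ is required and the identity is valid in the extended reals.
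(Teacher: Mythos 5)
The paper cites this result (as Theorem~2.1 of the referenced source) and does not reproduce a proof, so there is no in-paper argument to compare against. Your blind proof is correct and self-contained: it is the classical sandwich built from (i) the lower bound via the layer-cake identity for $\abs{a-b}$ together with $\abs{\bbE_\gamma(\cdot)}\le\bbE_\gamma\abs{\cdot}$ applied \emph{inside} the $u$-integral, (ii) the upper bound via the Hoeffding--Fr\'echet comonotone coupling $(F_X^{-1}(U),F_Y^{-1}(U))$, and (iii) the identification of the two integrals by a Tonelli swap and the Galois adjunction $F^{-1}(z)\le u\iff z\le F(u)$ for $z\in(0,1)$. The two places you flag as delicate are exactly the right ones, and you handle them correctly: the Galois equivalence genuinely needs both the $\inf$-convention for the quantile and the right-continuity and monotonicity of the CDF (and gives, as a byproduct, that $F_X^{-1}(U)$ has law $F_X$), while both interchanges are Tonelli interchanges of nonnegative integrands and hence unconditional, so the identity holds in $[0,\infty]$ with no integrability hypothesis on $X$ or $Y$. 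The argument is sound as written.
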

	The quantile function formulation informs the computational approach. In practical applications, we work with empirical distributions derived from samples.  
	\begin{definition}\label{def: empirical_wass}
		Given $m$ samples from each distribution, $\hua{X_i}_{i=1}^m$ and $\hua{Y_i}_{i=1}^m$, the empirical Wasserstein-1 distance is defined by 
		\begin{equation}\label{eq: empirical W_1}
			W_1(\hua{X_i}, \hua{Y_i}) = \frac{1}{m}\sum_{i = 1}^m|{X_{(i)} - Y_{(i)}}|,
		\end{equation}
		where $X_{(i)}$ and $Y_{(i)}$ denote the $i$-th order statistics of the samples.
	\end{definition}
	The most profound insight for our calibration methodology comes from an alternative characterization of the Wasserstein-1 distance provided by Kantorovich-Rubinstein duality \cite{villani2021topics}.
	\begin{definition}\label{def: KR duality}
		Let $X, Y$ be two random variables, then the Kantorovich-Rubinstein duality states that 
		\begin{equation}\label{eq: KR duality}
			W_1(X, Y) = \sup\limits_{\mathrm{Lip}(f) \leq 1}\br{\bbE\fa{f(X)} - \bbE\fa{f(Y)}},
		\end{equation}
		where $\mathrm{Lip}(f)$ denotes the Lipschitz constant of $f$ and $f: \bbR \to \bbR$.
	\end{definition}
	This duality result transforms the primal definition into a supremum over a class of test functions, which provides a powerful adversarial interpretation to the calibration approach.

	\subsection{Calibration via Distributional Matching}
	We present the calibration approach in this section, and the main aim is to match the maturity-wise risk-neutral distribution of the terminal underlying asset price. For each maturity $T_j$, let $S_{T_j}^{MKT}$ denote the market-implied terminal asset price under the risk-neutral measure, and let $S_{T_j}(\bmtheta)$ denote the corresponding model-generated terminal asset price under parameters $\bmtheta$.	Suppose there exists a desirable set of model parameters $\bmtheta_{\text{true}}$ for the rBergomi model so that the model price perfectly matches the market price, i.e.,
	\begin{align*}
		P\br{\bmtheta_{\text{true}}; K, T} = P^{MKT}\br{K, T},\quad \text{for all}\ K\text{ and } T.
	\end{align*} 
	However, such an identity may fail in practice for two reasons. On the one hand, every model is inevitably misspecified to some extent, so the observed option prices may not lie exactly in the range of prices generated by the model. On the other hand, observed option prices are noisy and only defined up to bid-ask spreads. Conventional model calibration typically minimizes pointwise discrepancies on a finite set of contracts, which generally suffers from overfitting. Our framework instead calibrates the model by matching the implied terminal distributions. 
	
	Consider a European option with payoff $h$, which is a 1-Lipschitz function of the underlying price. For simplicity, assume a zero interest rate. Then, by the Kantorovich-Rubinstein duality \eqref{eq: KR duality}, the pricing error satisfies 
	\begin{align}\label{eq: inequality}
		\abs{P\br{\bmtheta; K, T} - P^{MKT}\br{K, T}}  
		&= \abs{\bbE\fa{h\br{S_T\br{\bmtheta}}} - \bbE\fa{h\br{S_T^{MKT}}}} \nonumber \\
		&\leq W_1\br{S_T\br{\bmtheta},S_T^{MKT}}\quad \text{for all}\ K.
	\end{align}
	This indicates that the Wasserstein-1 distance between the model-generated and market-implied distributions of the terminal underlying asset provides a uniform control on pricing errors over all 1-Lipschitz payoffs. The duality perspective offers an adversarial interpretation: the Wasserstein-1 distance corresponds to the worst-case pricing error over all 1-Lipschitz payoffs, and implicitly identifies the strike that maximizes the discrepancy between model and market prices. This leads to our proposed calibration objective that minimizes the Wasserstein-1 distance between the model-generated and market-implied distributions across all maturities. 
	
	We define the main calibration function by 
	\begin{align}\label{eq: loss}
		\calL\br{\bmtheta} = \frac{1}{M}\sum_{j=1}^MW_1(S_{T_j}\br{\bmtheta}, S_{T_j}^{MKT}).
	\end{align}
	 To describe the calibration problem in a general form, we allow flexible parameterizations of the model inputs $(\xi_0(t), H, \rho, \eta)$. We write
	\begin{align*}
		\xi_0(t) := \xi_0(t;\boldsymbol{\theta}_\xi),\quad H := \theta_H,\quad \rho := \theta_\rho,\quad \eta := \theta_\eta,
	\end{align*}
	and collect all unknown quantities into the finite-dimensional parameter vector $\bmtheta := (\bmtheta_\xi, \theta_H, \theta_\rho, \theta_\eta)$, where $\bmtheta_\xi$ denotes the parameters used to represent $\xi_0(\cdot)$, while $H, \rho$ and $\eta$ are treated as directly trainable scalar variables. In the numerical implementation used in this paper, the resulting finite-dimensional optimization problem is solved by L-BFGS-B under box constraints. The optimization is terminated according to the standard stopping criteria of L-BFGS-B, namely when the projected gradient norm falls below a prescribed tolerance, when the relative decrease of the objective falls below a prescribed tolerance, or when the maximum number of iterations is reached. 
	
	The complete calibration procedure is summarized in Algorithm \ref{alg: Wasserstein calibration}. The algorithm compares one-dimensional terminal distributions independently across maturities, rather than matching full path-space distributions. It should be interpreted as a general calibration workflow once the maturity-wise target distributions have been specified. In a synthetic-data setting, the target empirical distributions can be generated directly from prescribed ground-truth parameters. In practical applications, the target distributions are not directly observable and must be recovered from option prices, see \cite{figlewski2018risk} and references therein for a review of this direction. This recovery step is addressed in Section \ref{subsec: RND extraction} through stochastic volatility inspired (SVI) parameterization, followed by numerical inversion of the recovered cumulative distribution function (cdf). After the preprocessing step, the resulting target quantiles can be fed into Algorithm \ref{alg: Wasserstein calibration} exactly the same way as in the synthetic setting. 
	
	\begin{algorithm}	
		\caption{Calibration of rBergomi model by Wasserstein-1 distance}
		\label{alg: Wasserstein calibration}
		\begin{algorithmic}	
			\Require{Time grid $0 = t_0 < t_1 < \cdots < t_n = T$; \\
				Maturity set $\hua{T_j}_{j = 1}^M$;\\
				Target sorted empirical samples $\{S_{T_j}^{MKT, (i)}\}_{i=1}^m$, arranged in ascending order for each maturity $T_j$; \\
			    Initial parameter vector $\bmtheta^{(0)} = (\bmtheta_\xi, \theta_H, \theta_\rho, \theta_\eta)$; \\
				Box-constrained parameter domain $\boldsymbol{\Theta}$;\\
				Number of Monte Carlo samples per maturity at each iteration $m$;\\
				Maximum number of iterations $k_{\text{max}}$;\\
				Stopping tolerance $\epsilon_f, \epsilon_g$.
			}	          
			\State{\textbf{Initialization:}
				$\bmtheta \gets \bmtheta^{(0)}$}			
			\For{$k = 0, 1, \cdots, k_{\text{max}}-1$}
			\For{$T_j$ in $\hua{T_1, \cdots, T_M}$}		
			\State{Generate $m$ Monte Carlo samples $\{S_{T_j}^{(i)}(\bmtheta^{(k)})\}_{i=1}^m$ by the mSOE scheme;}
			\State{Sort the simulated samples in ascending order:
			\begin{align*}
			S_{T_j}^{(1)}(\bmtheta^{(k)}) \leq \cdots  \leq S_{T_j}^{(m)}(\bmtheta^{(k)});
			\end{align*}}
	        \State{Compute the empirical Wasserstein-1 distance:
	        \begin{align*}
	        W_1^{(j)}(\bmtheta^{(k)}) = \frac{1}{m}\sum_{i=1}^m\abs{S_{T_j}^{(i)}(\bmtheta^{(k)}) - S_{T_j}^{MKT, (i)}};	        
	        \end{align*}}
			\EndFor	
			\State{Compute the calibration loss:
			\begin{align*}
			\calL(\bmtheta^{(k)}) = \frac{1}{M}\sum_{j=1}^M W_1^{(j)}(\bmtheta^{(k)});		
			\end{align*}}	
		    \State{Compute the gradient $\nabla_{\bmtheta} \mathcal{L}(\bmtheta^{(k)})$ by automatic differentiation;}
		    \State{Perform one box-constrained L-BFGS-B update for $\bmtheta^{(k)}$ subject to the box constraints $\boldsymbol{\Theta}$;} 
	        \State{Check stopping criteria:}
	        \If{the projected gradient norm is below $\epsilon_g$ or if the relative decrease of the objective is below $\epsilon_f$}
	        \State{Terminate the calibration;}
	        \EndIf
	        \EndFor
			\State \Return{calibrated parameter set $\bmtheta^\ast$.}
		\end{algorithmic}
	\end{algorithm}

	Within the calibration setting considered in this paper, the Wasserstein-based objective presents the following potential advantages relative to conventional pointwise price-based fitting for the same model:
	\begin{itemize}
		
		\item Distributional Alignment: By matching the entire risk-neutral distribution rather than prices at discrete strikes, our method ensures that the calibrated model accurately captures the market-implied dynamics across all strike levels. This is particularly crucial for pricing exotic options whose values depend on the full distribution or path properties, not just terminal payoffs at specific strikes. 
		
		\item Robustness to Overfitting: The uniform bound interpretation provided by the Kantorovich-Rubinstein duality ensures that our calibration objective controls pricing errors across all possible strikes, not just those included in the calibration set. This inherent regularization makes the method particularly robust and reduces overfitting to potentially noisy price observations.
		
		\item Computational Tractability through Monte Carlo: The "simulate-and-compare" approach enabled by our efficient mSOE scheme, circumvents the need for closed-form pricing formulas. This suggests that the same “simulate-and-compare” principle may be applicable to a broader class of complex models for which efficient Monte Carlo simulation is available. The empirical Wasserstein distance between one-dimensional distributions is computationally efficient, which only requires $\calO(m\log(m))$ operations per evaluation.
		
		\item Improved Optimization Behavior: As we demonstrate empirically in Section \ref{subsubsec: loss landscape}, the Wasserstein-1 objective yields smoother and less irregular optimization landscapes than traditional metrics such as MSE in the experiments considered in \ref{subsubsec: calibration_to_European_options}. This improves the convergence stability, reduces sensitivity to initialization, and enhances the reliability of the calibration process.
	\end{itemize}

	\section{Numerical tests}\label{sec: experiment}
	This section presents a comprehensive numerical evaluation of the proposed mSOE scheme and the Wasserstein-1 distance-based calibration algorithm. We first assess the numerical accuracy and convergence properties of the mSOE scheme against the SOE scheme and the hybrid benchmark, examining both implied volatility smiles and surfaces. Subsequently, we demonstrate the performance of the calibration framework, highlighting its generalization capability and parameter identifiability compared to the conventional MSE approach. Unless otherwise stated, benchmark quantities of this section are generated independently by Cholesky factorization, whereas candidate model evaluations during calibration are performed by the mSOE scheme.
	\subsection{Numerical accuracy of the mSOE scheme}
	
	\subsubsection{Kernel approximation}
	To complement the theoretical discussion, we numerically examine the convergence of the mSOE approximation of the kernel. The model parameters are specified as follows \cite{bennedsen2017hybrid} 
	\begin{align}
	\label{eq: rBergomi param}
	S_0 = 1,\quad \xi_0(t) = 0.235^2\quad H = 0.07, \quad \rho = -0.9,\quad \eta = 1.9,
	\end{align}
	We define the kernel approximation error $\varepsilon_N$ and the option pricing error as $\mathcal{E}_N$ as 
	\begin{align*}
		\varepsilon_N := \max_{t_i \in [t_1, T]} \abs{K(t_i) - \hat{K}(t_i)},\quad \mathcal{E}_N = \abs{\E[f(S_T)] - \E[f(\hat{S}_T)]},
	\end{align*}		
	As shown in the left panel of Figure \ref{fig: mSOE_convergence}, the approximation error decreases rapidly as $N$ increases for all reported values of $H$. The right panel shows the relation between the weak pricing error $\mathcal{E}_N$ and the kernel approximation error $\varepsilon_N$ on the intermediate range $N \in \hua{8, 12, 16, 20}$. For larger values of $N$, the pricing error is no longer dominated by the kernel approximation but affected by time-discretization and Monte Carlo noises, so it ceases to decrease monotonically with $\varepsilon_N$. On this range, the fitted slopes are $0.931$ for $H = 0.1$ and 0.851 for $H = 0.2$, which are reasonably close to the theoretical benchmark of 1. For the rougher case $H = 0.05$, the fitted slope is 0.382, which suggests that the asymptotic regime is harder to observe numerically in the highly rough setting. 
	\begin{figure}[hbt!]
		\centering
		\subfigure{
			\includegraphics[width=0.95\linewidth ,clip]{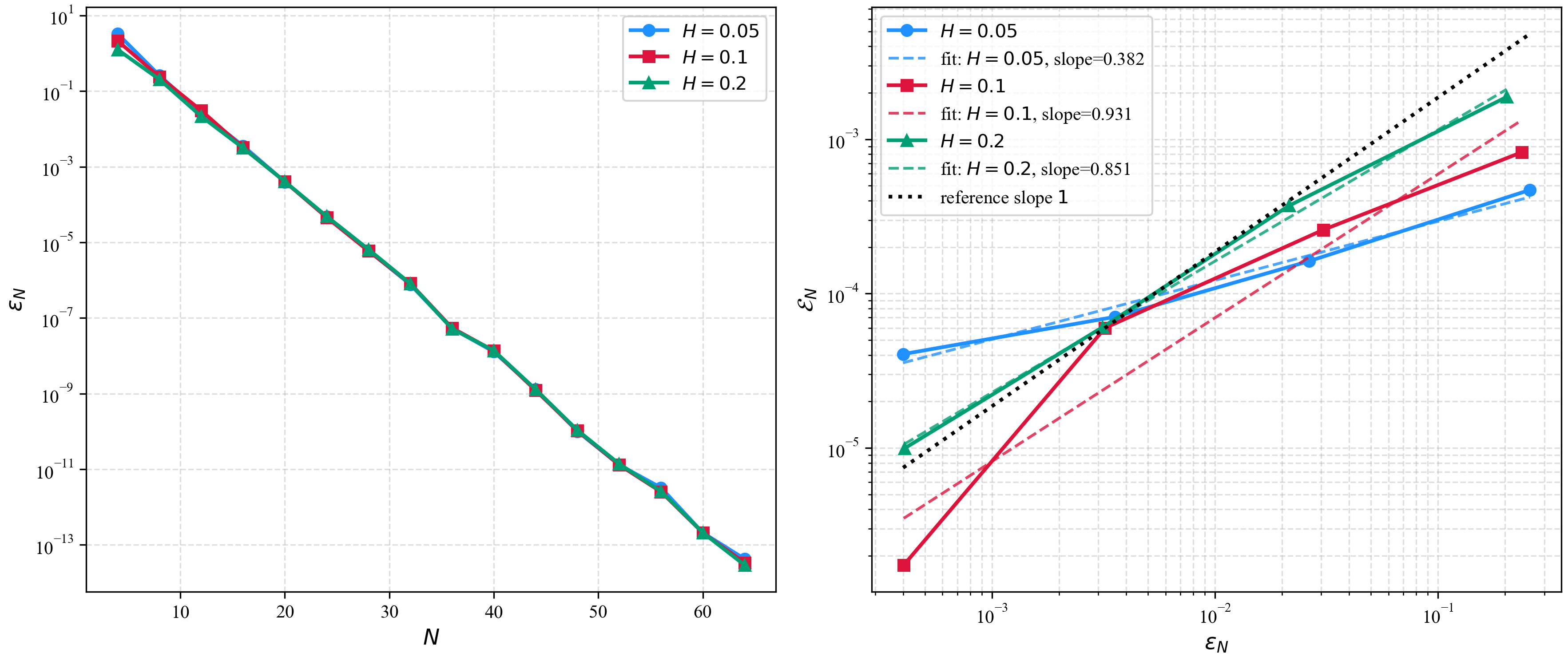}}	
		\caption{Convergence test for the kernel approximation \eqref{eq: mSOE for K} with $T = 1, K = 1, \tau = 1/1024$ and $2^{24}$ Monte Carlo paths. Reference simulations are obtained by Cholesky factorization. Left: kernel approximation error $\varepsilon_N$ versus number of summation terms $N$ for $H = 0.05, 0.1, 0.2$. Right: weak pricing error $\mathcal{E}_N$ with $f(\cdot) = (\cdot - K)^+$ versus $\varepsilon_N$ on a log-log scale. The dashed lines are least-squares fits over $N = \hua{8, 12, 16, 20}$; the dotted black line indicates slope 1. The fitted slopes are 0.382, 0.931, and 0.851 for $H = 0.05, 0.1, 0.2$, respectively.}
		\label{fig: mSOE_convergence}
	\end{figure}

	\subsubsection{European call option}	
	To better position the mSOE scheme relative to the existing literature, we compare it not only with the SOE method, which is detailed in \cite[Section 4.3]{bayer2023markovian}, but also with the hybrid benchmark of Bennedsen et al. \cite{bennedsen2017hybrid}. We adopt the same model parameters as in \eqref{eq: rBergomi param} with a maturity $T = 1$ and a set of 21 log-strikes $k_i = -0.55 + 0.05 \times i$ for $i = 1, \cdots, 21$. The benchmark implied volatility smile is computed via the Cholesky factorization. For both the SOE and mSOE schemes, the nodes $\hua{\lambda_k}$ and weights $\hua{\omega_k}$ for the sum-of-exponentials approximation are determined using the method introduced in \cite[Section 2]{jiang2017fast}. Their specific values for varying numbers of summation terms $N$ are provided in Table \ref{tab: nodes and weights}. An additional objective is to numerically investigate the convergence rates of these schemes as the number of steps $n$ increases.

	Figure \ref{fig: imp vol for European call option} displays the implied volatility smiles generated by the Cholesky factorization, hybrid, SOE, and mSOE schemes for a fixed number of time steps $n = 128$. The SOE scheme exhibits minor deviations from the benchmark for at-the-money strikes and significant inaccuracies for out-of-the-money strikes. By contrast, both the hybrid and mSOE schemes mitigate these discrepancies substantially. When $N = 16$, the mSOE smile becomes visually indistinguishable from the benchmark and performs comparably to the hybrid approximation.

	To numerically determine the convergence rates of the implied volatility smiles, we conduct simulations with a sufficiently large sample size of $m = 2^{24}$ to minimize Monte Carlo error. Table \ref{tab: maximal rele err for smiles} reports the maximal relative errors over the strike range for both hybrid, SOE, and mSOE schemes, with the results visualized in Figure \ref{fig: weak convergence rate}. The SOE scheme fails to exhibit a clear convergent pattern as $n$ increases. The hybrid scheme is clearly more stable, which confirms the benefit of treating the singularity exactly. The mSOE scheme also demonstrates stable convergence across different values of $N$. For sufficiently large $N$, it attains comparable or even smaller errors than those of the hybrid scheme. As illustrated in Figure \ref{fig: weak convergence rate with varying N}, the convergence rate of the mSOE scheme improves as the number of summation terms $N$ increases.
	
	\begin{figure}[hbt!]
		\centering
		\subfigure{
			\includegraphics[width=1\linewidth ,clip]{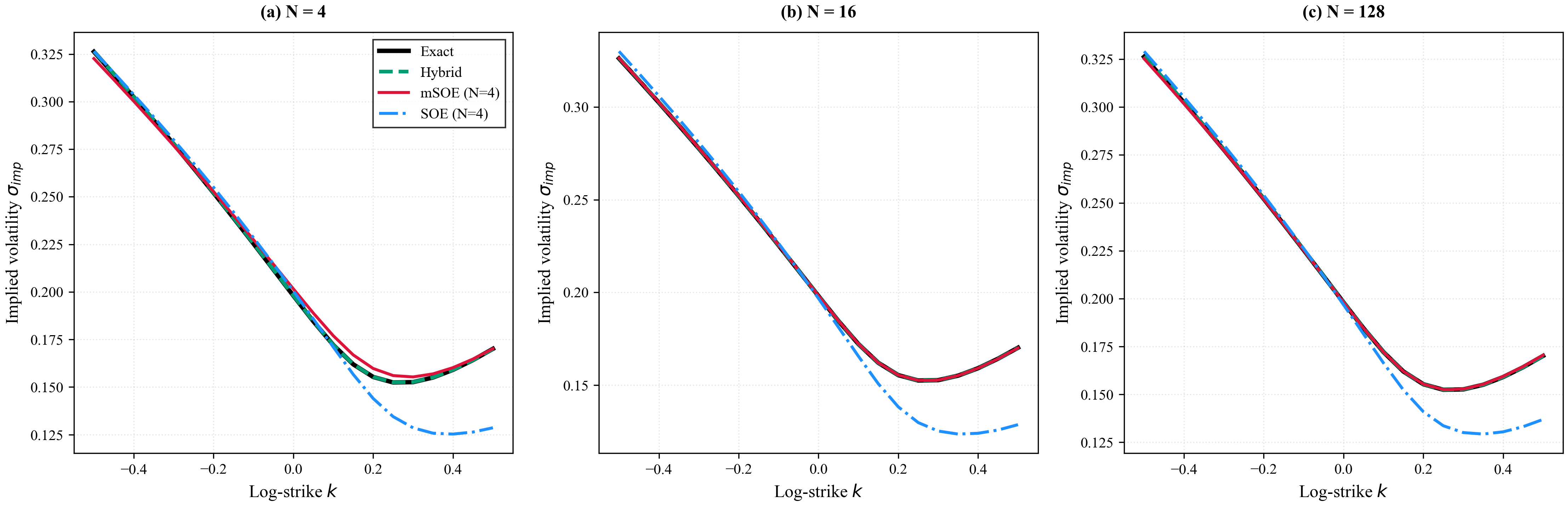}}	
		\caption{Implied volatility smiles using $2^{24}$ samples with  time steps $n = 128$, $H = 0.07, T=1$ using nodes and weights given in Table \ref{tab: nodes and weights} for $N = 4$ (left), $N = 16$ (middle) and $N = 128$ (right). }
		\label{fig: imp vol for European call option}
	\end{figure}
	
	\begin{table}[htp]
		\centering
		\begin{adjustbox}{max width=\textwidth}
			\begin{tabular}{c|c| ccccc ccccc}
				\toprule
				\multirow{2}{*}{$N = 4$} & $\lambda$&  0.047108& 1& 21.227784&   450.618823&&&&\\
				\multirow{2}{*}{} & $\omega$ &0.398569 & 1.482765& 5.516218&    20.521561&&&& \\
				\hline
				\multirow{2}{*}{$N = 8$} & $\lambda$ &0.161212 & 0.025989& 0.004190& 1& 6.203015& 38.477401& 238.675916&  1480.51040\\
				\multirow{2}{*}{} & $\omega$ & 0.404082& 0.184353& 0.084107& 0.885703& 1.941367& 4.255266 & 9.327086 & 20.443970\\
				\hline
				\multirow{4}{*}{$N = 16$} & $\lambda$  &0.470200& 0.221072&  0.102817& 0.039206& 0.004176& 1 &  2.126753 & 4.523076\\
				\multirow{4}{*}{} &$\omega$& 0.264736& 0.191490& 0.145476& 0.152321& 0.200620&  0.366211 & 0.506582 & 0.700758 \\			
				\multirow{4}{*}{} & $\lambda$  &9.619464& 20.458220& 43.509572& 92.534093& 196.797117& 418.538770& 890.128395 & 1893.08283\\
				\multirow{4}{*}{} & $\omega$ &0.969363& 1.340926 & 1.854911& 2.565910& 3.549438& 4.909960& 6.791977 & 9.395384 \\			
				\hline
			\end{tabular}
		\end{adjustbox}
		\caption{Nodes and weights with varying $N$s with $H = 0.07$, $\tau = 1/128$ and $T = 1$.}
		\label{tab: nodes and weights}
	\end{table}

	\begin{table}[htp]
		\centering	
		\begin{adjustbox}{max width=\textwidth}
			\begin{tabular}{c|c|cc|cc|cc}
				\toprule	
				& &\multicolumn{2}{c}{$N = 4$} & \multicolumn{2}{c}{$N = 16$} & \multicolumn{2}{c}{$N = 128$} \\
				\hline 
				\makecell{$n$} & Hybrid & SOE & mSOE & SOE & mSOE & SOE & mSOE \\
				\hline 			           
				$128$ &0.1607 &0.1224& 0.1617&0.1277& 0.1604& \textbf{0.0841}& 0.1638\\
				$256$ &0.0950&0.1685& 0.0930& 0.1737& \textbf{0.0920}&0.1079& 0.0957\\
				$512$ &0.0473&0.2042& 0.0473&0.1842& 0.0451&0.1227& \textbf{0.0443}\\
				$1024$ & 0.0177&0.1448& 0.0241& 0.1687& \textbf{0.0169}&0.1278& 0.0172\\
				$2048$ & 0.0016&0.1564& 0.0148 &0.1471& 0.0025&0.1248& \textbf{0.0009}\\
				\bottomrule
			\end{tabular}
		\end{adjustbox}
		\caption{Maximal relative error (over strikes) for the implied volatility smiles with $H = 0.07$, $T = 1$ and $m = 2^{24}$.} 
		\label{tab: maximal rele err for smiles}
	\end{table}
	
		\begin{figure}[hbt!]
		\centering
		\subfigure{
			\includegraphics[width=1\linewidth ,clip]{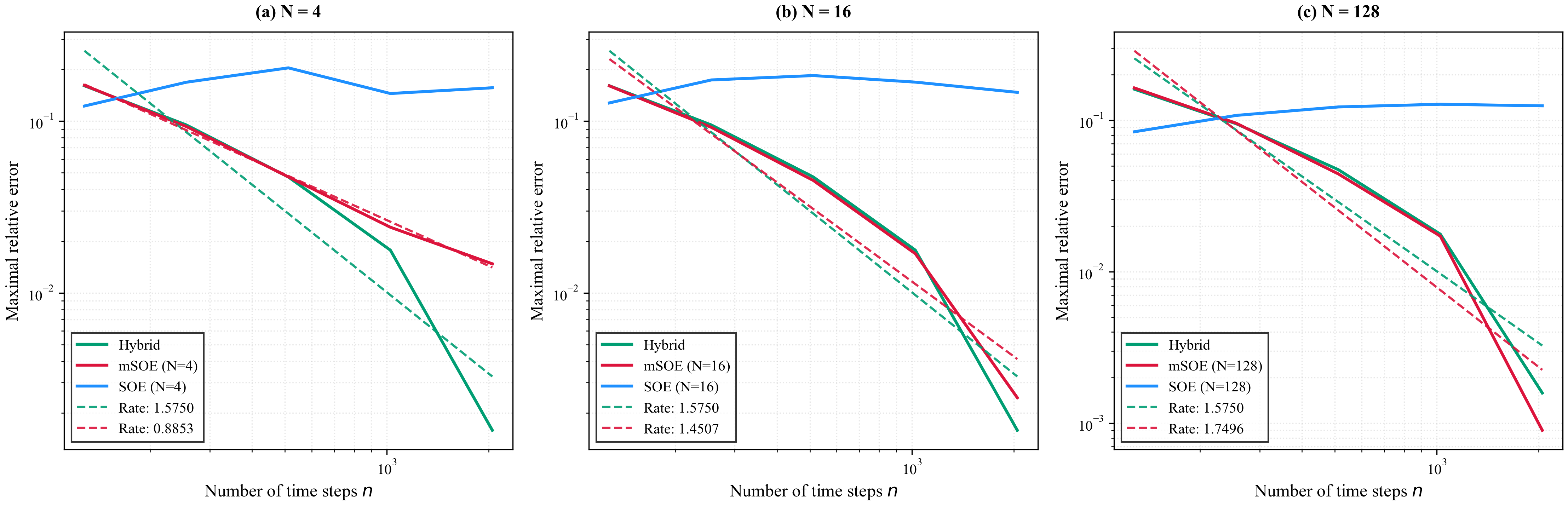}}	
		\caption{Maximal relative errors of implied volatility smiles using $m = 2^{24}$ samples with $H = 0.07$, $T = 1$ and $N = 4$ (left), $N = 16$ (middle) and $N = 128$ (right). The red, blue, and green curves represent the maximal errors over strikes between the true smile and the smile obtained by the mSOE, SOE, and hybrid schemes, respectively. The dashed lines give a reference for the weak convergence rates. }
		\label{fig: weak convergence rate}
	\end{figure}

	\begin{figure}[hbt!]
		\centering
		\subfigtopskip = 2pt
		\subfigure{
			\includegraphics[width=0.6\linewidth ,clip]{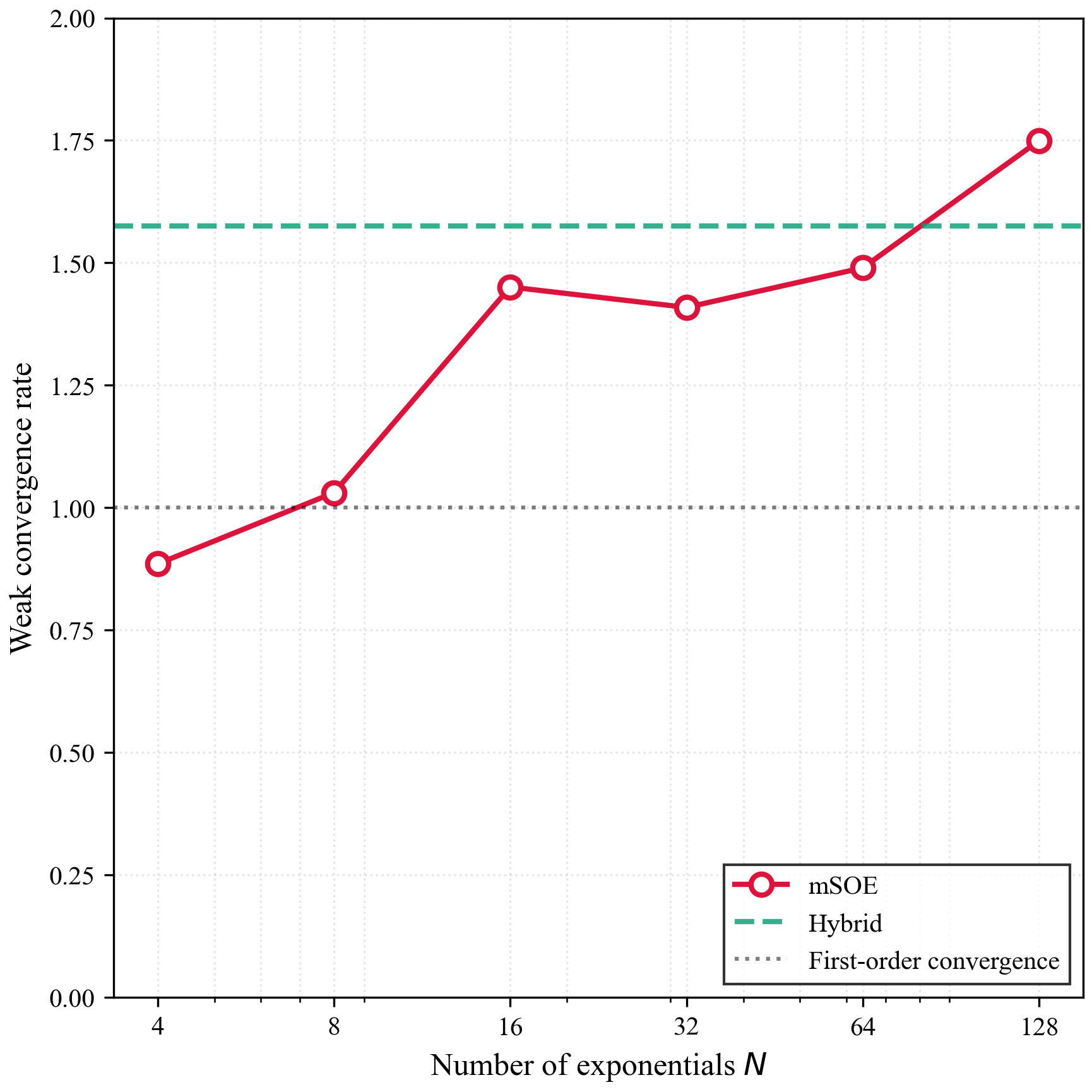}}	
		\caption{Weak convergence rate of maximal relative errors of implied volatility smiles for the mSOE scheme with varying $N$s.}
		\label{fig: weak convergence rate with varying N}
	\end{figure}

	\subsubsection{Implied volatility surface}
	We now examine the performance of schemes in recovering the entire implied volatility surface, utilizing the same set of model parameters as in \eqref{eq: rBergomi param}. The maturities and log-strikes are given as
	\begin{align*}
		T_j = \frac{j}{8},\quad \boldsymbol{k}_j = [-0.1, -0.09, \cdots, 0.05] \times \sqrt{T_j},\quad j = 1, 2,\cdots, 8. 
	\end{align*}
	The results are summarized in Table \ref{tab: maximal rele err for surfaces} and depicted in Figure \ref{fig: max rele error_surface}, which fundamentally align with the findings for the implied volatility smiles. The SOE scheme fails to converge consistently as $n$ increases, whereas both the hybrid benchmark and the mSOE scheme exhibit stable convergence behavior. The hybrid scheme again confirms the importance of treating the near-origin singularity exactly. Compared with the hybrid scheme, the mSOE scheme achieves comparable or better accuracy for sufficiently large $N$. As shown in Figure \ref{fig: weak convergence rate of surface with varying N}, its convergence rate improves systematically as $N$ increases.

	\begin{table}[htp]
		\centering	
		\begin{adjustbox}{max width=\textwidth}
			\begin{tabular}{c|c|cc|cc|cc}
				\toprule	
				& &\multicolumn{2}{c}{$N = 4$} & \multicolumn{2}{c}{$N = 16$} & \multicolumn{2}{c}{$N = 128$} \\
				\hline 
				\makecell{$n$} &Hybrid & SOE & mSOE & SOE & mSOE & SOE & mSOE\\
				\hline 			      
				$128$  &0.0430&0.0123& 0.0249 &0.0172&0.0157 &0.0156& \textbf{0.0150}\\
				$256$  &0.0222&0.0117& 0.0258&0.0196& 0.0079&0.0170& \textbf{0.0080}\\
				$512$  &0.0118&0.0108& 0.0253&0.0202&0.0040&0.0165& \textbf{0.0038}\\
				$1024$  &0.0056&0.0092& 0.0196&0.0195& 0.0012&0.0165& \textbf{0.0019}\\
				$2048$ &0.0027&0.0100& 0.0200&0.0172& 0.0012&0.0161& \textbf{0.0007}\\
				\bottomrule
			\end{tabular}
		\end{adjustbox}
		\caption{Maximal relative error (over maturities and strikes) for the implied volatility surface, with $H = 0.07$ and number of samples $2^{24}$.}
		\label{tab: maximal rele err for surfaces}
	\end{table}
	
	\begin{figure}[hbt!]
		\centering
		\subfigure{
			\includegraphics[width=1\linewidth ,clip]{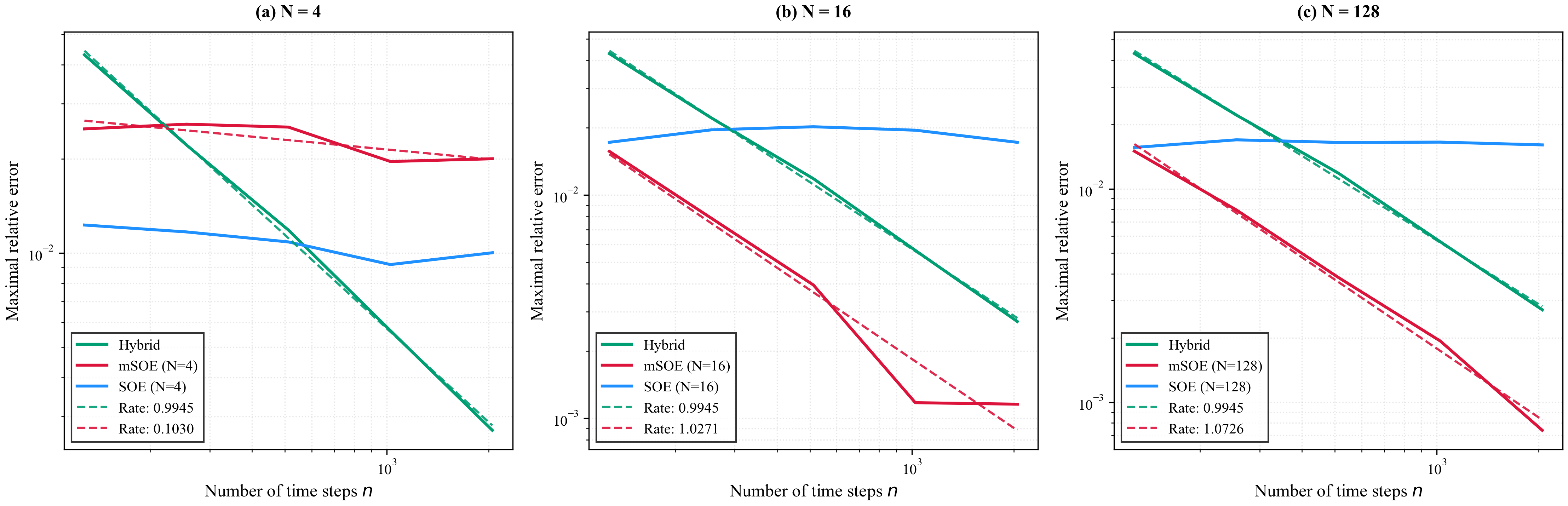}}	
		\caption{Maximal relative errors of implied volatility surfaces using $m = 2^{24}$ samples with $H = 0.07$ and $N = 4$ (left), $N = 16$ (middle) and $N = 128$ (right). The red, blue, and green curves represent the maximal errors between the true surface and surfaces obtained by the mSOE, SOE, and hybrid schemes, respectively. The dashed lines give a reference for the convergence rates.}
		\label{fig: max rele error_surface}
	\end{figure}

	\begin{figure}[hbt!]
		\centering
		\subfigure{
			\includegraphics[width=0.6\linewidth ,clip]{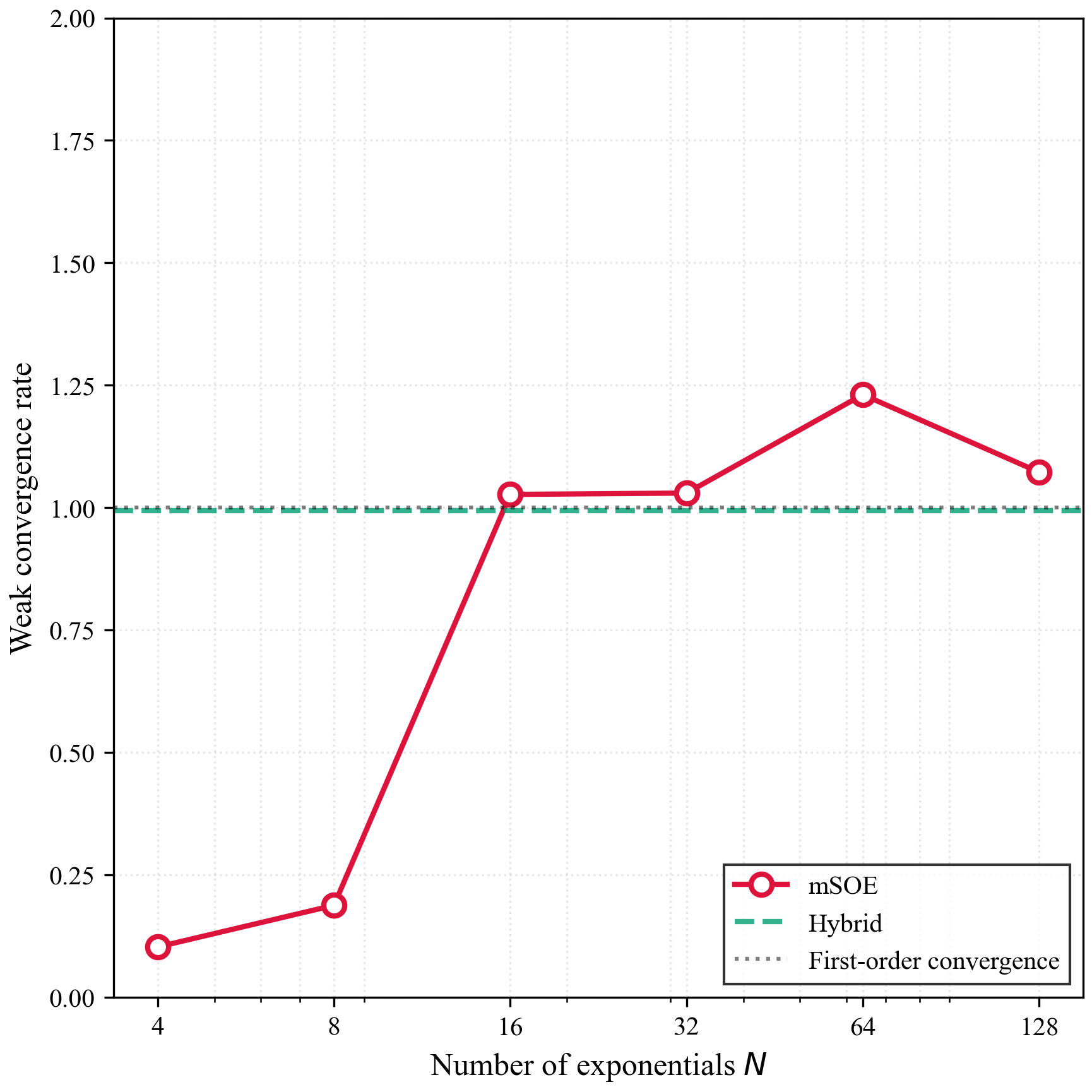}}	
		\caption{Weak convergence rate of maximal relative errors of implied volatility surface for the mSOE scheme with varying $N$s.}
		\label{fig: weak convergence rate of surface with varying N}
	\end{figure}
	
	\subsection{Calibration by Wasserstein-1 distance}\label{subsec: calibration}
	In this subsection, we evaluate the proposed calibration framework in a synthetic-data setting designed to isolate the effect of the calibration loss function. More precisely, for each maturity, the target risk-neutral distribution of the terminal underlying asset price is generated directly from the prescribed ground-truth rBergomi parameters by the Cholesky factorization, and is treated as observable in this subsection. Hence, the experiments reported here should be interpreted as oracle tests. They assess the intrinsic behaviour of the Wasserstein-based calibration objective without the additional uncertainty introduced by recovering the market-implied distribution from finitely many noisy option quotes. 
		
	In real markets, the risk-neutral distribution is not directly observable and must be inferred from option prices (or equivalently, implied volatilities), which is an ill-posed inverse problem and may introduce approximation and regularization errors in tails. Our purpose here is not to bypass this issue, but to examine whether Wasserstein-1 distance improves parameter recovery, optimization stability, and out-of-sample generalization relative to the conventional MSE objective once the target distribution is available. The practical recovery step from option data is discussed separately in Section \ref{subsec: RND extraction}. We emphasize that the goal of this subsection is not to claim a new optimal transport methodology. However, we aim to assess whether the Wasserstein-based distributional matching provides a practical and robust alternative to pointwise price fitting in the setting of rBergomi calibration.

	To isolate the impact of the loss function itself, we adopt a simplified setup where all model parameters $\br{\xi_0, H, \rho, \eta}$ are treated as scalar trainable variables. This allows any observed performance differences to be attributed directly to the choice of loss function. We conduct the calibration process under four distinct parameter scenarios, as listed in Table \ref{tab: model param}. Case 0 serves as the benchmark. Case 1 features a lower volatility level, Case 2 alters the roughness parameter, and Case 3 modifies the correlation and volatility-of-volatility level. The initial guess $\bmtheta_{\text{init}}$ for each case is generated by applying a fixed vector of relative deviation rates to the ground-truth parameters $\bmtheta_{\text{true}}$, ensuring all experiments start from a consistent level of prior knowledge.
	
	The maturities and strikes used for training and testing are specified in Table \ref{tab: contract param}. The training set for each maturity includes five at-the-money and slightly out-of-the-money options, representing the most liquid segment of the market. Deep out-of-the-money options are intentionally excluded from the training set but are included in the test set to evaluate the model's tail-pricing generalization. 
	
	\begin{table}[htp]
		\centering
		\begin{tabular}{c|c|cccc}
			\toprule
			Case & &$\xi_0(t)$ & $H$ & $\rho$ & $\eta$ \\ 
			\hline
			\multirow{2}{*}{0} & $\bmtheta_{\text{true}}$ &0.09 & 0.07 & -0.9 & 1.9\\
			& $\bmtheta_{\text{init}}$ & 0.15 & 0.12 & -0.7 & 1.5\\
			\hline
			\multirow{2}{*}{1} & $\bmtheta_{\text{true}}$ & 0.04 & 0.07 & -0.9 & 1.9\\
			& $\bmtheta_{\text{init}}$ &0.067& 0.12 & -0.7& 1.5\\
			\hline
			\multirow{2}{*}{2} & $\bmtheta_{\text{true}}$& 0.09 & 0.02 & -0.9 & 1.9\\
			& $\bmtheta_{\text{init}}$ &0.15& 0.034& -0.7& 1.5\\
			\hline
			\multirow{2}{*}{3} & $\bmtheta_{\text{true}}$ & 0.09 & 0.07 & -0.7 & 2.2 \\
			& $\bmtheta_{\text{init}}$&0.15& 0.12& -0.544& 1.737\\
			\bottomrule
		\end{tabular}
		\caption{The set of ground-truth $\bmtheta_{\text{true}}$ and the initial guess $\bmtheta_{\text{init}}$ of model parameters.}
		\label{tab: model param}
	\end{table}
	
	\begin{table}[htp]
		\centering
		\begin{tabular}{lll}
			\toprule
			& Train set & Test set \\
			\hline
			$T$& \multicolumn{2}{l}{$[0.3, 0.5, 1.0]$} \\		
	        $K$ &  $[0.9, 0.95, 1.0, 1.05, 1.1]$  & $[0.8, 0.85, 1.15, 1.2]$ \\			
			\bottomrule
		\end{tabular}
		\caption{The set of maturities and strikes used for training and testing.}
		\label{tab: contract param}
	\end{table}

	\subsubsection{Calibration with European Options}\label{subsubsec: calibration_to_European_options}
	We first compare the performance of Wasserstein-1 and MSE objectives in the calibration of European put and call options. The benchmark price surface $\{P_j(\bmtheta_{\text{true}})\}_{j=1}^M$ and the target terminal samples $\{S_{T_j}(\boldsymbol{\theta}_{\text{true}})\}_{j=1}^M$ are simulated with time discretization size $\tau = 1/500$ and Monte Carlo repetitions $m = 2^{15}$ using the Cholesky factorization. During the calibration process, we use the same setting to generate samples of $\{S_{T_j}(\boldsymbol{\theta})\}_{j=1}^M$ for each candidate parameter vector $\bmtheta$ and compute $\{P_j(\bmtheta)\}_{j=1}^M$ by the mSOE scheme with the kernel approximation tolerance $\varepsilon = 10^{-5}$. The number of exponential terms $N$ is adapted dynamically according to the current $H$ so as to maintain this prescribed approximation precision $\varepsilon$. Both Wasserstein and MSE calibration problems are solved by the box-constrained L-BFGS-B algorithm \cite{byrd1995limited, zhu1997algorithm}, while automatic differentiation is implemented in TensorFlow \cite{abadi2016tensorflow} for automatic differentiation. The model parameters are subject to the following constraints:
	\begin{align*}
	\xi_0 \in [0.001, 0.3],\quad H \in [0.01, 0.499],\quad \rho \in [-0.999, -0.1],\quad \eta \in [1, 4].	
	\end{align*}
	We set the stopping tolerance to be $\epsilon_f^\text{wass} = 10^{-10}, \epsilon_g^{\text{wass}} = 10^{-6}, \epsilon_f^{\text{mse}} = 10^{-12}, \epsilon_g^{\text{mse}} = 10^{-8}$. The maximum number of iterations is set to be 500. 	
		
	The calibration results are summarized in Table \ref{tab: calibration_performance}. Here, “MSE-Narrow” denotes direct price fitting on the narrow strike range listed in Table \ref{tab: contract param}, whereas “W1-Direct” denotes Wasserstein calibration using the oracle target distributions. To assess computational efficiency, the third and fourth columns report the number of completed iterations and the total calibration time, respectively. We evaluate the in-sample and out-of-sample pricing accuracy by the root mean squared error (RMSE)
	\begin{align*}
		\text{RMSE} &= \sqrt{\frac{1}{M}\sum_{j = 1}^{M}(P_j(\bmtheta^{\ast}) - P_j(\bmtheta_{\text{true}}))^2},
	\end{align*}
	and the maximum absolute percentage error (MaxAPE)
	\begin{align*}
		\text{MaxAPE} = \max_{j =1, \cdots, M}\hua{\abs{\frac{P_j(\bmtheta^{\ast}) - P_j(\bmtheta_{\text{true}})}{P_j(\bmtheta_{\text{true}})}}},
	\end{align*}
	where $\bmtheta^{\ast}$ denotes the calibrated parameter vector. The RMSE measures the overall pricing accuracy, while MaxAPE records the worst-case relative error. The in-sample pricing errors are presented in columns 5-6, while the out-of-sample errors are shown in columns 7-8.

	As evidenced in Table \ref{tab: calibration_performance}, the calibration based on the Wasserstein-1 distance consistently yields more accurate results than the MSE-based calibration. It leads to uniformly smaller out-of-sample RMSE and MaxAPE across all parameter cases. The in-sample pricing errors of these two methods are of comparable magnitude. This is consistent with the fact that the MSE objective is defined only on a discrete set of option prices, whereas the Wasserstein distance matches the terminal distributions of each selected maturity and therefore captures more tail information of the model. From the computational perspective, the Wasserstein-based method does not uniformly reduce the total calibration time, but it generally attains competitive accuracy with a moderate number of iterations. 

	\begin{table}[htp]
		\centering	
		\begin{adjustbox}{max width=\textwidth}
			\begin{tabular}{c|c|cc|cc|cc}
				\toprule	
				\multirow{2}{*}{}&\multirow{2}{*}{}&\multicolumn{2}{c}{Training} & \multicolumn{2}{c}{in-sample} & \multicolumn{2}{c}{out-of-sample} \\
				\hline 
				Case & Method&\# Iters & Total time (s) & RMSE &MaxAPE &RMSE & MaxAPE\\
				\hline 
				\multirow{2}{*}{0} & MSE-Narrow &38 & 317.40& 0.0002& 0.0122& 0.0007& 0.4077\\
				& W1-Direct &67 & 553.59& 0.0003& 0.0074& 0.0002& 0.0641\\	
				\hline 
				\multirow{2}{*}{1} & MSE-Narrow &51 & 488.31& 0.0002& 0.0718& 0.0003&0.4811\\
				& W1-Direct & 125 & 1028.03& 0.0003 & 0.0271& 0.0001& 0.1180\\	
				\hline 
				\multirow{2}{*}{2} & MSE-Narrow & 14& 141.40& 0.0003 & 0.0195& 0.0003&0.0913\\
				& W1-Direct & 89& 734.71 & 0.0003& 0.0182& 0.0003& 0.0813\\	
				\hline 
				\multirow{2}{*}{3} & MSE-Narrow & 26 & 255.99& 0.0003& 0.0200& 0.0006&0.1482\\
				& W1-Direct & 41& 344.90&  0.0005& 0.0186& 0.0003& 0.0596\\	
			   \bottomrule
			\end{tabular}		
		\end{adjustbox}
		\caption{The calibration performance comparison between MSE- and Wasserstein-based calibrations.}
		\label{tab: calibration_performance}
	\end{table}
	
	To further assess the parameter recovery property, we present in Table \ref{tab: parameter_recovery_ape} the absolute percentage errors (APE) between the calibrated  parameters and ground-truth values, defined by
	\begin{align*}
		\text{APE} = \abs{\frac{\theta^\ast - \theta_{\text{true}}}{\theta_{\text{true}}}}.
	\end{align*}
The Wasserstein-based calibration exhibits consistently smaller parameter errors for all four parameters in every parameter case. The improvement is especially pronounced for Cases 0 and 3, for which the MSE-based calibration produces substantial deviations from the true values even when the in-sample pricing errors remain small. This observation highlights a clear identifiability advantage of the Wasserstein objective. By calibrating directly to terminal asset price distributions rather than to a discrete collection of option prices, Wasserstein distance could effectively recover the data-generating parameter set and reduce the risk of overfitting to the training contracts. We also note that the recovery of $H$ remains challenging in the extremely rough regime, especially in Case 2. We therefore interpret the results as showing improved identifiability relative to MSE, rather than uniformly precise recovery across all parameter regimes.
	
	\begin{table}[htp]
		\centering
		\begin{adjustbox}{max width=\textwidth}
			\begin{tabular}{c|c|cccc}
				\toprule
				Case &Method& $\xi_0(t)$ & $H$ & $\rho$ & $\eta$ \\
				\hline
				\multirow{2}{*}{0} & MSE-Narrow& 0.0416 & 0.2953& 0.0328& 0.1822\\
				& W1-Direct& 0.0015 & 0.0407& 0.0077& 0.0114\\ 
				\hline
				\multirow{2}{*}{1} & MSE-Narrow& 0.0134& 0.0906& 0.0069& 0.0606\\
				& W1-Direct& 0.0010& 0.1073 & 0.0397& 0.0380\\ 
				\hline
				\multirow{2}{*}{2} & MSE-Narrow& 0.0042& 1.3787& 0.2212 & 0.2099\\
				& W1-Direct& 0.0015& 1.0841& 0.2107& 0.1834\\ 
				\hline
				\multirow{2}{*}{3} & MSE-Narrow&0.0580& 0.7036& 0.1215& 0.2087\\
				& W1-Direct& 0.0046& 0.0372& 0.0182& 0.0178\\ 
				\bottomrule
			\end{tabular}
		\end{adjustbox}
		\caption{Absolute percentage error between $\bmtheta_{\text{true}}$ and $\bmtheta^\ast$.} 
		\label{tab: parameter_recovery_ape}
	\end{table}
	To directly evaluate the calibration quality in implied volatilities, we present in Figure \ref{fig: iv_surface_comparison_case0} the target and calibrated implied volatility (IV) surfaces together with corresponding signed error maps for Case 0, which is representative of the behaviour observed across the four scenarios in Table \ref{tab: calibration_performance}. The upper panels compare the reference IV surface with the calibrated surfaces obtained under the Wasserstein and MSE objectives, respectively. Both calibration methods recover the general shape of the target surface, but the Wasserstein-based approach appears to provide a closer approximation to the reference surface in a global sense. This observation is supported by the lower panels, which display the signed IV errors pointwise over the maturity-strike grid. The Wasserstein calibration leads to smaller and more spatially uniform errors, whereas the MSE-based calibration shows more significant local discrepancies. Furthermore, while both methods fit the training contracts reasonably well, the Wasserstein objective appears to deliver better extrapolation performance outside the training set. These results indicate that the Wasserstein formulation offers a more robust characterization of the surface structure in Case 0.	
	
	\begin{figure}[hbt!]
		\centering
		\subfigure{
			\includegraphics[width=0.99\linewidth ,clip]{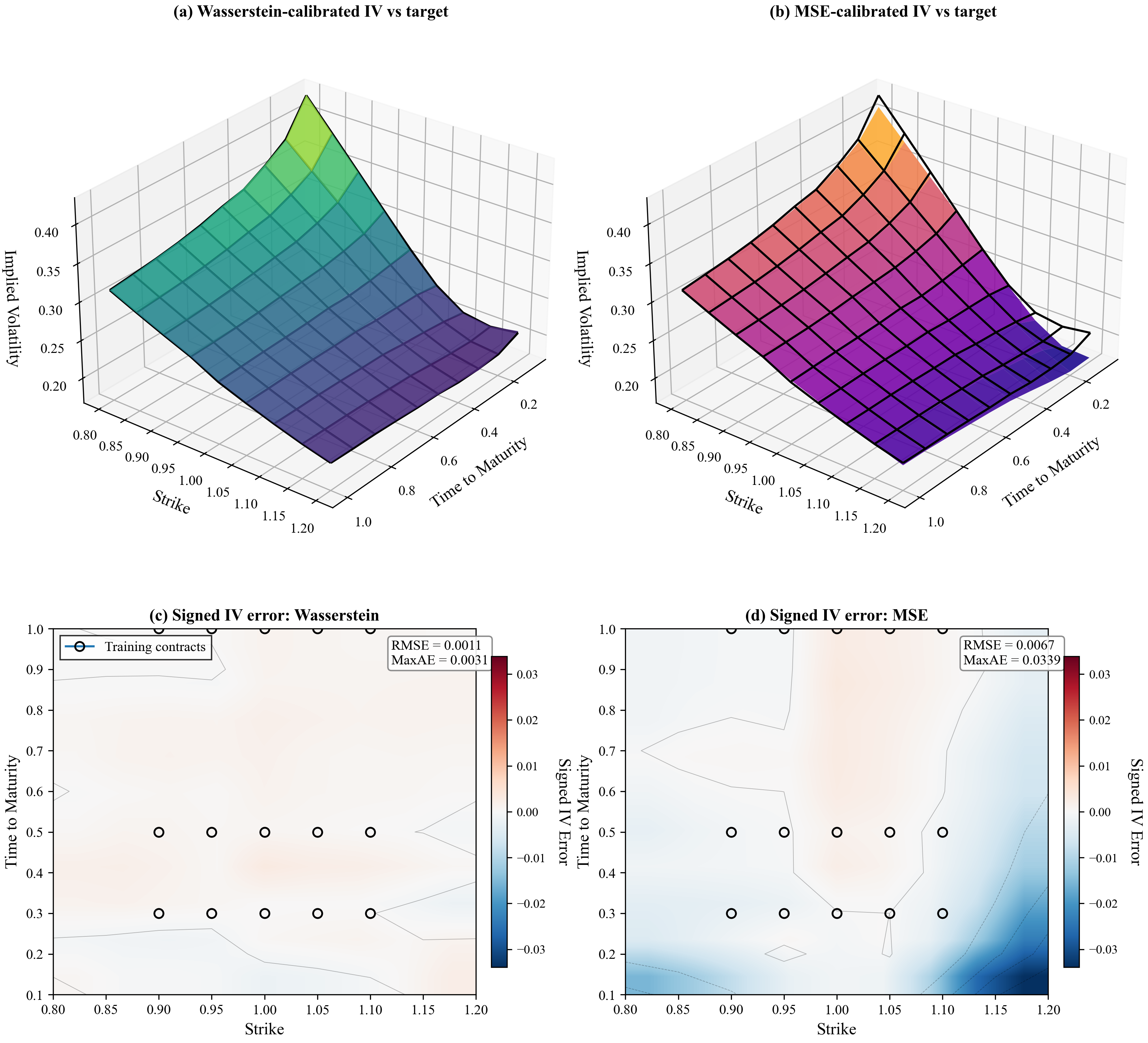}}	
		\caption{Target and calibrated implied volatility surface with signed error maps for Case 0.  Upper panel: Wasserstein- and MSE-calibrated surfaces. The black wireframe represents the target implied volatility surface. Lower panel: the associated pointwise signed IV errors. Circles denote the training contracts. The RMSE and maximum absolute error (MaxAE) are computed based on the calibrated IV surfaces.}
		\label{fig: iv_surface_comparison_case0}
	\end{figure}

	\subsubsection{Generalization to Barrier Options}
	To further evaluate the calibrated model's capacity to capture the tail behavior of the underlying asset's distribution, we consider out-of-sample pricing experiments involving barrier options. Since the payoff depends on the entire price path through a barrier crossing event, these contracts are more sensitive to tail behaviours and path dynamics than European options used in calibration. We consider two complementary products:
	
	\begin{enumerate}
		\item Down-and-out put (DOP), which is sensitive to downside movements and probes the left tail. A DOP has the same terminal payoff as a European put, provided that the underlying price does not fall below a lower barrier level $B$ during $[0, T]$. Its payoff is given by
		\begin{equation*}
			h_{\text{DOP}}(S_T) = \left\{
			\begin{aligned}
				&\max(K - S_T, 0), &\text{if}\ \min_{t\in [0,T]} S_t > B, \\
				&0, &\text{otherwise}.
			\end{aligned}
			\right.
		\end{equation*}
		\item Up-and-out call (UOC), which is sensitive to upward movements and tests the right tail. A UOC has the same terminal payoff as a European call, provided that the underlying price does not rise above an upper barrier level $B$ during $[0, T]$. Its payoff is given by 
		\begin{equation*}
			h_{\text{UOC}}(S_T) = \left\{
			\begin{aligned}
				&\max(S_T - K, 0), &\text{if}\ \max_{t\in [0,T]} S_t < B, \\
				&0, &\text{otherwise}.
			\end{aligned}
			\right.
		\end{equation*}
	\end{enumerate}
	\begin{table}[htp]
		\centering
		\begin{tabular}{c|c|c}
			\toprule
			& DOP & UOC\\
			\hline
			$T$ & \multicolumn{2}{c}{$[0.3, 0.5, 1.0]$} \\
			\hline
			$K$ & 0.95 & 1.05 \\
			\hline
			$B$ & $[0.70, 0.71, \cdots, 0.85]$ & $[1.15, 1.16, \cdots, 1.30]$ \\				
			\bottomrule
		\end{tabular}
		\caption{Contract parameters for the barrier-option experiments}
		\label{tab: contract param of barrier options}
	\end{table}
	The contract parameters for the DOP and UOC tests are reported in Table \ref{tab: contract param of barrier options}. To rigorously assess the model's generalization capability, the barrier levels $B$ are deliberately set outside the range of strikes used for calibration in Table \ref{tab: contract param}. In addition, to address statistical uncertainty in the pricing results, we estimate for each barrier contract the benchmark price by Monte Carlo simulation together with a 95\% confidence interval, using Cholesky factorization.

	Figure \ref{fig: barrier_options_analysis_set_0_uoc} and \ref{fig: barrier_options_analysis_set_0_dop} present the prices of UOC and DOP options computed over a range of barrier levels using the model parameters calibrated in Case 0, respectively. In each figure, the black curve represents the benchmark price obtained from the ground-truth parameters, the shaded band shows its 95\% confidence interval.
	For UOC, the pricing curves generated by Wasserstein-calibrated parameters (red) are nearly indistinguishable from the benchmark across all maturities and barrier levels, and in most cases remain within the benchmark confidence bands. By contrast, the MSE-based calibration (blue) exhibits systematic upward deviations, especially for shorter maturities, and many prices fall outside the 95\% confidence interval. For DOP contracts, the difference between the two methods is smaller, but the Wasserstein-based calibration still tracks the benchmark more accurately and consistently remains within the confidence bands. 
	
Moreover, Table \ref{tab: barrier_options_analysis_summary} summarizes the barrier option pricing errors for the two calibration methods across four parameter sets. We report the mean absolute error (MAE), root mean squared error (RMSE), mean absolute percentage error (MAPE), and the fraction of barrier contracts for which the price implied by the calibrated model lies outside the 95\% confidence interval of the corresponding benchmark (Significant 95\%). Overall, Wasserstein-based calibration performs better than MSE in most cases. The improvement is particularly pronounced for UOC contracts, where W1-Direct consistently delivers smaller errors in Cases 0, 1, and 3, together with a much lower proportion of contracts exhibiting statistically significant deviations from the true benchmark. For DOP contracts, the difference between the two methods is generally smaller, but Wasserstein-based calibration remains more robust and typically achieves smaller errors. Even though Case 2 is comparatively mixed, this table indicates that Wasserstein-based calibration better preserves the tail information relevant for barrier options. 
		\begin{figure}[hbt!]
		\centering
		\subfigure{
			\includegraphics[width=0.99\linewidth ,clip]{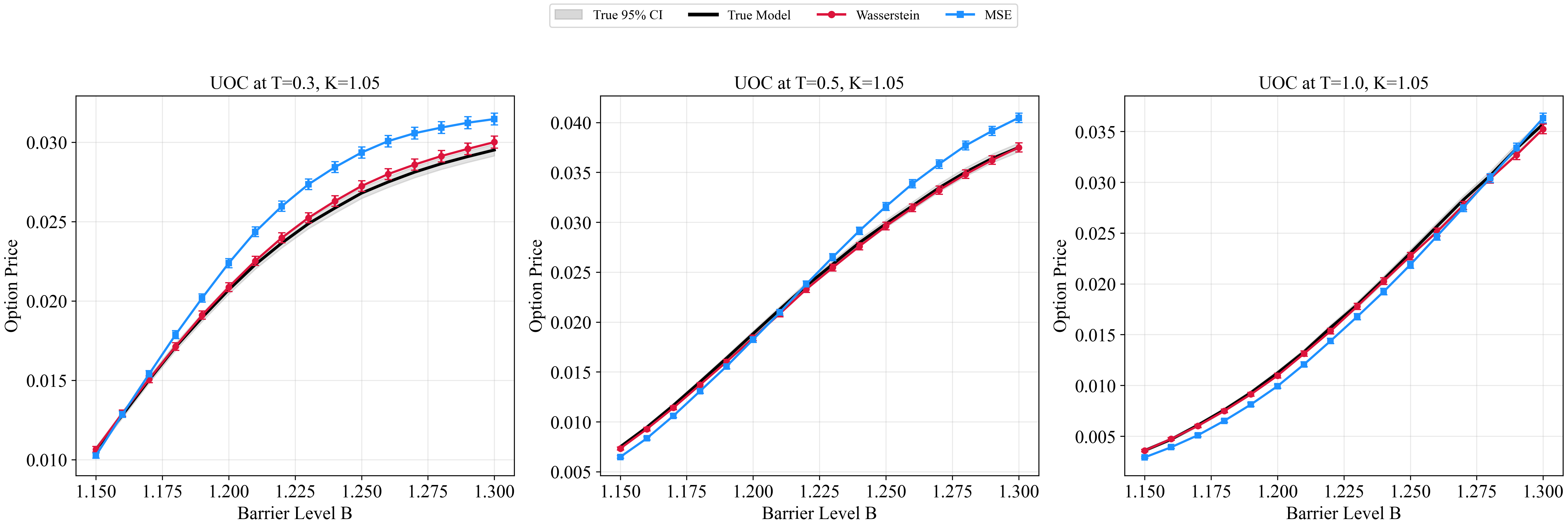}}	
		\caption{Comparison of UOC prices across barrier levels between Wasserstein and MSE calibration. }
		\label{fig: barrier_options_analysis_set_0_uoc}
	\end{figure}

	\begin{figure}[hbt!]
		\centering
		\subfigure{
			\includegraphics[width=0.99\linewidth ,clip]{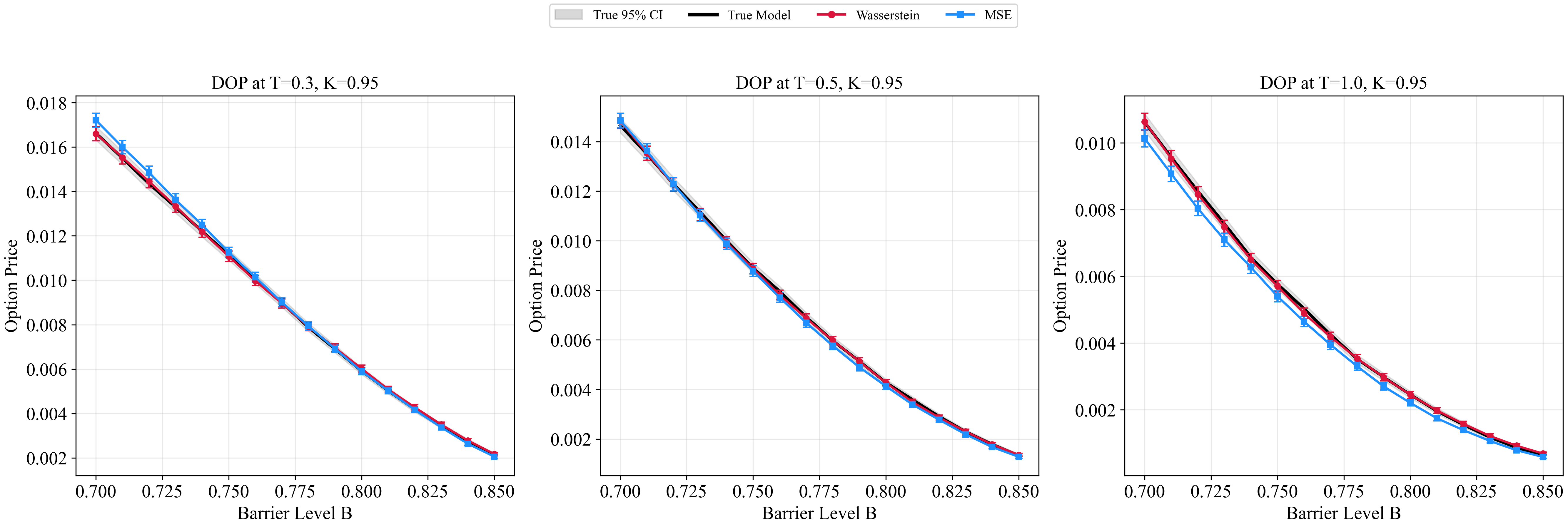}}	
		\caption{Comparison of DOP prices across barrier levels between Wasserstein and MSE calibration. }
		\label{fig: barrier_options_analysis_set_0_dop}
	\end{figure}	
	
	\begin{table}[htp]
		\centering
		\begin{tabular}{c|c|c|c ccc}
			\toprule
			Set & Option & Method & MAE & RMSE & MAPE(\%) & Significant 95\% \\
			\hline
			\multirow{4}{*}{0} 
			& \multirow{2}{*}{UOC} & W1-Direct & 0.0003 & 0.0003& 1.3139& 0.1042\\ \cline{3-7}
			&                      & MSE-Narrow  &0.0014& 0.0016& 7.2243& 0.8750\\ \cline{2-7}
			& \multirow{2}{*}{DOP} & W1-Direct &0.0001& 0.0001& 1.1914& 0\\ \cline{3-7}
			&                      & MSE-Narrow  & 0.0002& 0.0003& 4.1823& 0.4375\\ \hline
			
			\multirow{4}{*}{1} 
			& \multirow{2}{*}{UOC} & W1-Direct & 0.0001& 0.0001& 0.5605& 0\\ \cline{3-7}
			&                      & MSE-Narrow & 0.0004& 0.0005& 2.3202& 0.5417\\ \cline{2-7}
			& \multirow{2}{*}{DOP} & W1-Direct & 0.0001& 0.0001 & 0.9376 & 0\\ \cline{3-7}
			&                      & MSE-Narrow & 0.0001& 0.0001& 1.7750& 0.0833\\ \hline
			
			\multirow{4}{*}{2} 
			& \multirow{2}{*}{UOC} & W1-Direct & 0.0003 & 0.0003& 2.4619& 0.4167\\ \cline{3-7}
			&                      & MSE-Narrow & 0.0003 & 0.0003& 1.6498& 0.3333\\ \cline{2-7}
			& \multirow{2}{*}{DOP} & W1-Direct & 0.0003 & 0.0003& 2.7565& 0.3542\\ \cline{3-7}
			&                      & MSE-Narrow & 0.0004& 0.0005& 4.5288& 0.8125\\ \hline
			
			\multirow{4}{*}{3} 
			& \multirow{2}{*}{UOC} & W1-Direct & 0.0001& 0.0002& 0.8829& 0\\ \cline{3-7}
			&                      & MSE-Narrow  &0.0008& 0.0009& 6.0428& 0.8333\\ \cline{2-7}
			& \multirow{2}{*}{DOP} & W1-Direct & 0.0001& 0.0002& 1.5392& 0.0625\\ \cline{3-7}
			&                      & MSE-Narrow  & 0.0005 & 0.0006& 4.9409& 0.6667\\
			\bottomrule
		\end{tabular}
		\caption{Summary statistics for barrier-option pricing errors and 95\% significance rates.}
		\label{tab: barrier_options_analysis_summary}
	\end{table}

	\subsubsection{Loss landscape}\label{subsubsec: loss landscape}
	To compare the optimization landscapes induced by two calibration losses, Figures \ref{fig: 2D_contour_plot_wass} and \ref{fig: 2D_contour_plot_mse} present two-dimensional contour plots of the Wasserstein-1 and MSE losses for all parameter pairs in $(\xi_0, H, \rho, \eta)$. In each panel, the selected parameter pair is varied over a $25 \times 25$ grid, while the remaining parameters are fixed at their ground-truth values. For visual clarity, we plot the logarithm of the loss. The red star marks the true parameter value in Case 0 of Table \ref{tab: model param}, and the yellow circle denotes the empirical minimizer on the numerical grid. The losses are evaluated under the same numerical setting as in the calibration experiments.
	
	The contour plots empirically suggest different optimization behaviors for the two losses. The MSE contours show narrower valleys and stronger distortion with more pronounced curvature variation, especially for the pairs $(\xi_0, \eta), (\rho, H), (\eta, H)$. It indicates a more rugged and less stable optimization problem. By contrast, the Wasserstein-1 contours are visually smoother and closer to elliptical, with more clearly localized basins around the minimizer in these experiments. Moreover, the empirical minimizers under Wasserstein-1 tend to lie closer to the true parameters. The orientation and narrowness of several MSE contours also suggest stronger parameter coupling, while the Wasserstein-1 contours indicate weaker coupling. This helps to explain the improved parameter identifiability observed in the calibration experiments.

	\begin{figure}[hbtp]
		\centering
		\subfigure{
			\includegraphics[width=1\linewidth,clip]{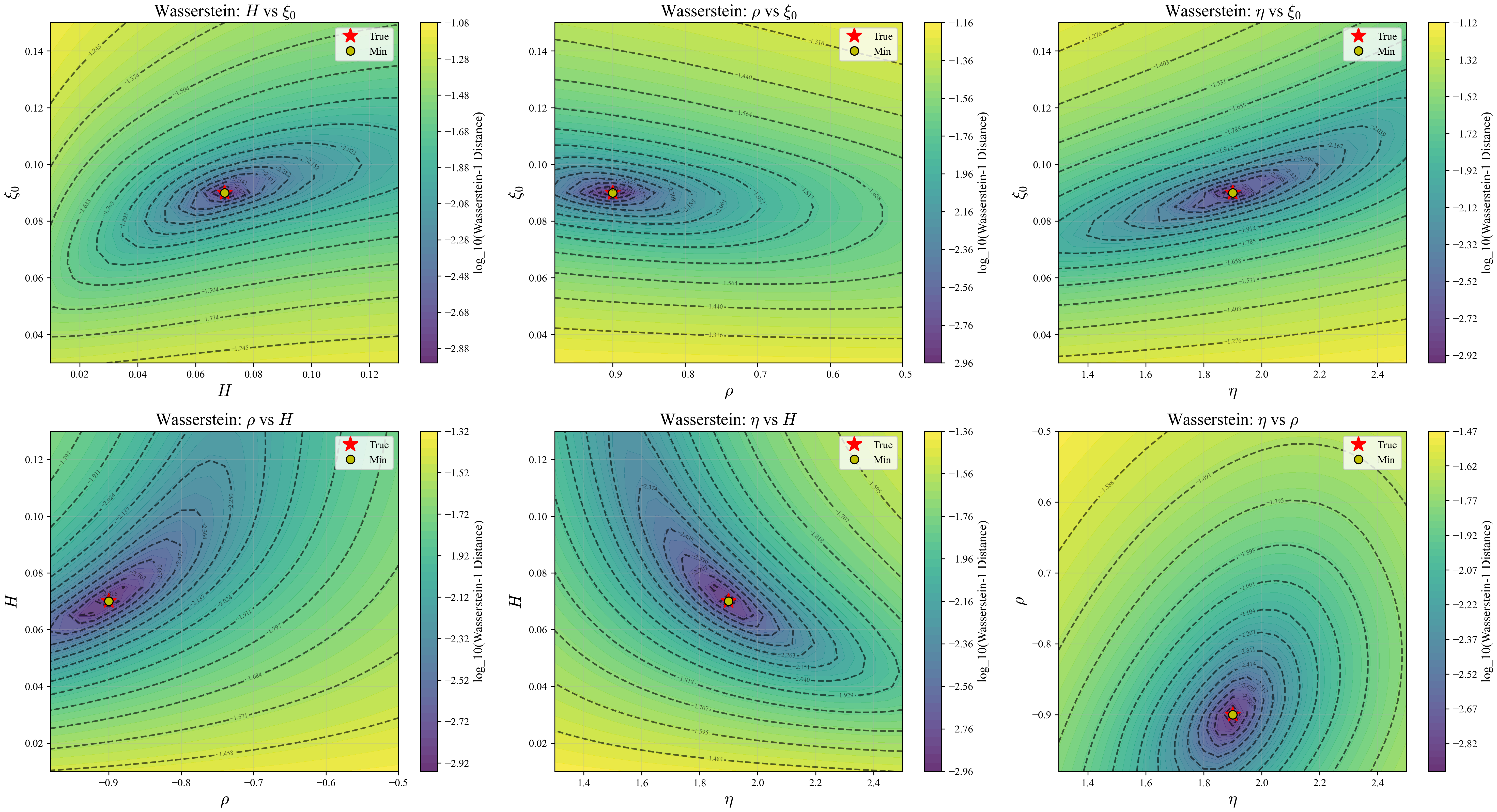}}	
		
		\caption{Two-dimensional contour plots of the logarithm of the Wasserstein-1 calibration loss for all parameter pairs in $(\xi_0,H,\rho,\eta)$.}
		\label{fig: 2D_contour_plot_wass}
	\end{figure}

	\begin{figure}[hbtp]
		\centering
		\subfigure{
			\includegraphics[width=1\linewidth,clip]{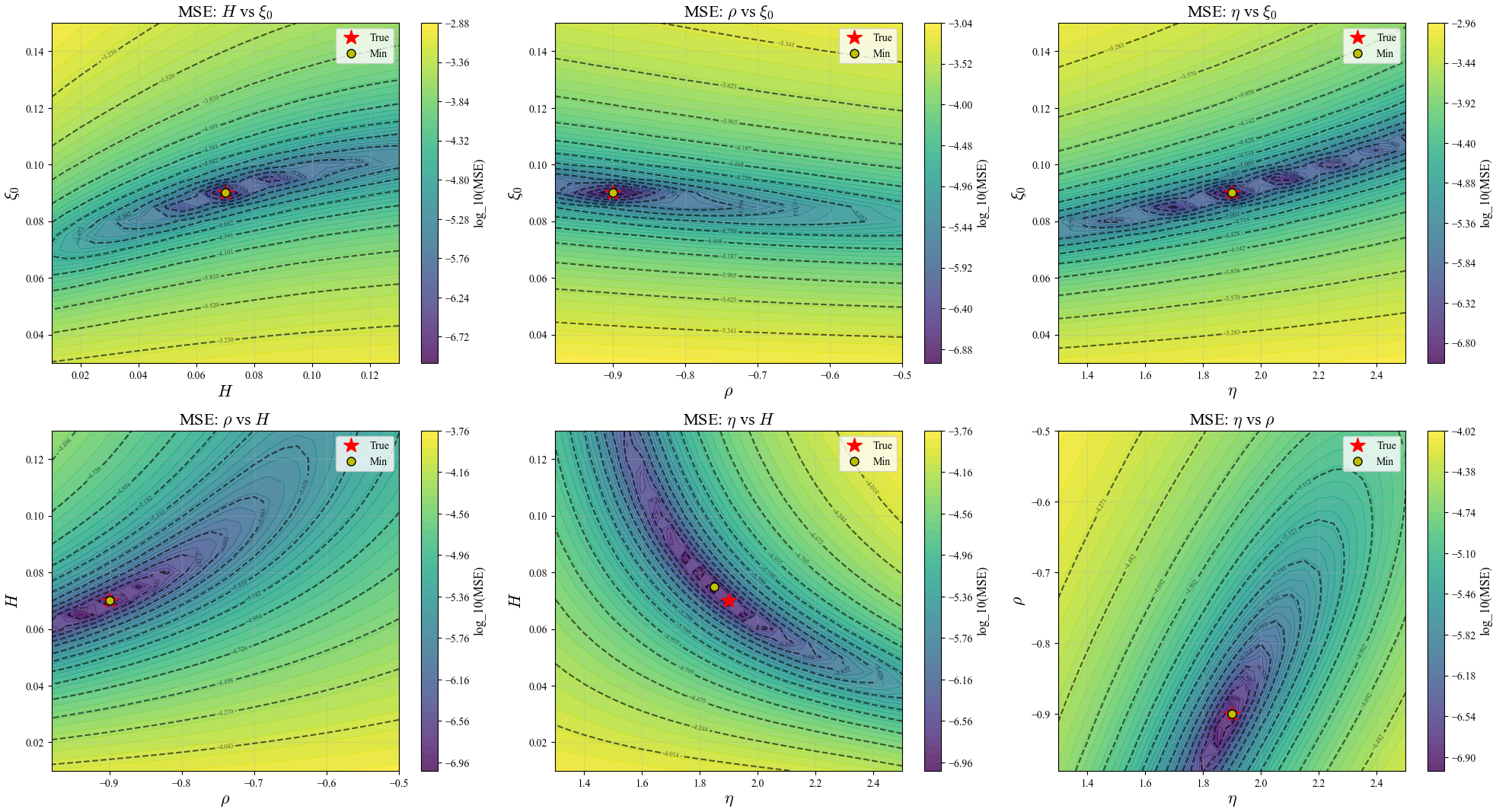}}
		\caption{Two-dimensional contour plots of the logarithm of the MSE calibration loss for all parameter pairs in $(\xi_0, H,\rho,\eta)$.}
		\label{fig: 2D_contour_plot_mse}
	\end{figure}

	\subsection{Recovering the target distribution via SVI}\label{subsec: RND extraction}
	Section \ref{subsec: calibration} considers a setting in which the target risk-neutral distribution is directly generated from the ground-truth model parameters. In practice, such a distribution is not readily observable and must be recovered from discrete option prices. We therefore complement the oracle experiments by introducing a distribution-recovery step based on the SVI (stochastic volatility inspired) parameterization \cite{gatheral2004parsimonious}. The purpose of this subsection is twofold: first, to bridge the gap between the idealized setup of Section \ref{subsec: calibration} and a more realistic calibration workflow; second, to quantify the approximation error introduced when the target distribution is reconstructed from a finite set of European option prices. 
	
	Specifically, we start from the synthetic option prices generated by Cholesky factorization on the maturity set $T \in \hua{0.3, 0.5, 1.0}$, together with the narrow range of strike grid $K$ listed in Table \ref{tab: contract param} and the wide range of strike grid in Table \ref{tab: wide quotes}. From these prices we first compute Black-Scholes implied volatilities $\sigma_{BS}(K, T)$ and then the corresponding total variances $w(k, T) := \sigma_{BS}^2(k, T)T$, where $k = \log(K/F_t)$ is log-moneyness and $F_t = \bbE[S_t | \calF_0]$ is the forward price. The resulting total-variance slices are fitted independently across maturities using the raw SVI parameterization \cite{gatheral2014arbitrage}.  
	
	For each fixed maturity $T$, we represent the total implied variance as 
	\begin{align}\label{eq: SVI}
		w(k) = a + b\left\{ \rho(k-m) + \sqrt{(k-m)^2 + \sigma^2}\right\}, 
	\end{align}
	where the parameter set $(a, b, \rho, m, \sigma)$ satisfies $a \in \R, b \geq 0, \abs{\rho} < 1, m \in \R, \sigma > 0$ and the condition $a + b\sigma\sqrt{1 - \rho^2} \geq 0$, which ensures the non-negativity of $w(k)$ for all $k \in \R$. Each parameter explicitly controls the shape of volatility smiles, and we refer readers to \cite{gatheral2014arbitrage} for their specific effects. In the implementation, these parameters are calibrated by a two-step routine. It first optimizes $(m, \sigma)$ by minimizing the least-squares fitting error of the total variance slice, and then solves a constrained linear least-squares problem for $(a, b, \rho)$. This procedure is repeated until the improvement in mean squared error falls below a prescribed tolerance.  
    
    The fitted SVI slices provide a smooth proxy of the implied volatility smiles from discrete option observations. Based on this parameterization, we recover the maturity-wise risk-neutral distribution through the analytical cumulative distribution function (cdf) by the Breeden-Litzenberger equation. In particular, let $w^\prime(k)$ denote the derivative of the total variance with respect to log-moneyness, the risk-neutral cdf has the representation:
    \begin{align*}
    	F_{S_T}(K) = \Phi(-d_2) + \phi(d_2)\frac{w^\prime(k, T)}{2\sqrt{w(k, T)}},\quad d_2 = \frac{-k}{\sqrt{w(k)}} - \frac{\sqrt{w(k)}}{2}, 
    \end{align*}
    where $\Phi(\cdot)$ and $\phi(\cdot)$ are the standard normal cdf and density, respectively. To construct the target samples required by the Wasserstein calibration algorithm, we do not simulate directly from the fitted SVI model. Alternatively, we generate a set of equally spaced probability levels and compute the corresponding quantiles by numerically inverting the recovered cdf:
    \begin{align*}
    	S_{T, (i)}^{MKT} = F_{S_T}^{-1}\br{\frac{i-0.5}{m}},\quad i = 1, \cdots, m. 
    \end{align*}
    This quantile-based construction is particularly convenient for Wasserstein calibration, since the empirical Wasserstein-1 distance in one dimension can be computed directly from sorted samples or quantile functions. Therefore, once the SVI-implied quantiles have been obtained, they can be used in exactly the same way as the oracle quantiles in Section \ref{subsec: calibration}. The only difference is the non-negligible recovery error inherited from the discrete option data in the SVI fitting procedure. 
\begin{table}[htp]
	\centering
	\begin{tabular}{lll}
		\toprule
		& train set & test set \\
		\hline
		$T$& \multicolumn{2}{l}{$[0.3, 0.5, 1.0]$} \\		
		$K$ &  $[0.8, 0.85, 0.9, 0.95, 1.0, 1.05, 1.1, 1.15, 1.2]$  & $[0.7, 0.75, 1.25, 1.3]$ \\			
		\bottomrule
	\end{tabular}
	\caption{The set of maturities and strikes used for training and testing.}
	\label{tab: wide quotes}
\end{table}

	\begin{table}[htp]
		\centering
		\begin{tabular}{c|c|c|c|c}
			\toprule
			Set& $T$ & W1-benchmark & W1-Narrow & W1-Wide \\
			\hline
			\multirow{3}{*}{0}& 0.3 & 0.0011& 0.0017& 0.0012\\
			& 0.5 & 0.0013& 0.0037&  0.0020\\
			& 1 & 0.0014 & 0.0135& 0.0051\\		
			\hline
			\multirow{3}{*}{1}& 0.3 & 0.0008& 0.0011& 0.0009\\
			& 0.5 & 0.0008& 0.0014& 0.0011\\
			& 1 & 0.0010& 0.0046&  0.0027\\		
			\hline
			\multirow{3}{*}{2}& 0.3 & 0.0011 & 0.0007& 0.0008\\
			& 0.5 & 0.0020& 0.0042& 0.0017\\
			& 1 & 0.0026& 0.0525& 0.0054\\		
			\hline
			\multirow{3}{*}{3}& 0.3 & 0.0016 & 0.0053& 0.0016\\
			& 0.5 & 0.0019& 0.0106&  0.0017\\
			& 1 & 0.0025& 0.0061&  0.0045\\				
			\bottomrule
		\end{tabular}
		\caption{Empirical Wasserstein distance comparison. }
		\label{tab: recovery property}
	\end{table}

	To evaluate the recovery error, Table \ref{tab: recovery property} compares three empirical Wasserstein-1 distances. The first column, denoted by W1-benchmark, is the empirical Wasserstein-1 distance between two independent Monte Carlo samples from the true model, which serves as a sampling-error baseline. W1-Narrow and W1-Wide are the Wasserstein-1 distances between the true distribution and the SVI-recovered distribution using the narrow and wide strike set, respectively. The results show that the width of the strike interval has a substantial impact on the recovery accuracy. In almost all cases, W1-Wide is smaller than W1-Narrow. The improvement is particularly pronounced for longer maturities. This confirms that including a wider range of strikes improves the identification of the implied underlying asset price distribution. Meanwhile, the table also indicates that the SVI-based recovery remains imperfect even under the wide strike range. W1-Wide is consistently above the W1-benchmark, especially for the longer maturity scenarios, which implies that the final Wasserstein-based calibration error must be attributed to this intermediate distribution-recovery step. Nevertheless, for some short maturity cases, the wide-range recovery is already very close to the Monte Carlo baseline. In these cases, the reconstructed distribution is sufficiently accurate for practical purposes. 
	
		\begin{table}[htp]
		\centering	
		\begin{adjustbox}{max width=\textwidth}
			\begin{tabular}{c|l|cc|cc|cc}
				\toprule	
				\multirow{2}{*}{}&\multirow{2}{*}{}&\multicolumn{2}{c}{Training} & \multicolumn{2}{c}{in-sample} & \multicolumn{2}{c}{out-of-sample} \\
				\hline 
				Set & Method &\# Iters & Total time (s) & RMSE &MaxAPE &RMSE & MaxAPE\\
				\hline 
				\multirow{3}{*}{0} & MSE-Wide &49 & 467.08& 0.0002& 0.0953& 0.0002& 0.3100\\
				& W1-Recovered-Wide & 71& 677.05& 0.0004& 0.0428& 0.0004& 0.1395\\	
				& W1-Recovered-Narrow &62& 592.42& 0.0006& 0.0365& 0.0009& 0.2686\\
				\hline 
				\multirow{3}{*}{1} & MSE-Wide &53 & 511.91& 0.0002& 0.2444& 9.0581e-5& 0.1636\\
				& W1-Recovered-Wide &74& 698.69& 0.0003& 0.0791& 0.0002& 1.0816\\	
				& W1-Recovered-Narrow &64 & 609.12& 0.0003& 0.0374&0.0003& 0.1544\\
				\hline 
				\multirow{3}{*}{2} & MSE-Wide & 12 & 122.72& 0.0003 & 0.0872& 0.0003 & 0.2693\\
				& W1-Recovered-Wide & 116 & 1089.23& 0.0004& 0.0612& 0.0004& 0.1160\\	
				& W1-Recovered-Narrow & 50 & 485.83& 0.0015& 0.0446& 0.0012&  0.0731\\
				\hline  
				\multirow{3}{*}{3} & MSE-Wide &41 & 393.35& 0.0003& 0.0394& 0.0002& 0.1824\\
				& W1-Recovered-Wide &72 &682.45& 0.0006 & 0.0373& 0.0004& 0.0578\\	
				& W1-Recovered-Narrow &60 & 572.26& 0.0006& 0.0213& 0.0005& 0.1499\\				
				\bottomrule
			\end{tabular}		
		\end{adjustbox}
		\caption{The calibration performance comparison between MSE and Wasserstein-1 distance as loss functions.}
		\label{tab: calibration-performance-recovered}
	\end{table}
	
	To study how the recovery error propagates to the calibration results, Table \ref{tab: calibration-performance-recovered} compares three calibration strategies: direct price fitting by MSE on the wide strike range (MSE-Wide), Wasserstein-based calibration using the SVI-recovered target distribution from the wide strike range (W1-Recovered-Wide), and the Wasserstein calibration based on the narrow-strike SVI recovery (W1-Recovered-Narrow). This table indicates that W1-Recovered-Wide gives the most robust calibration performance across the four parameter sets. It tends to achieve that best balance between the in-sample fit and out-of-sample generalization. By contrast, W1-Recovered-Narrow is generally less accurate, which is consistent with the larger recovery errors observed in Table \ref{tab: recovery property}. A closer inspection of the direct MSE fitting on option prices shows that, even though competitive in the average fit quality, it is less informative about robustness and tail pricing. It generally achieves a larger maximum absolute percentage error. These comparisons are in line with the results of Section \ref{subsec: calibration}. The Wasserstein objective retains a noticeable regularizing effect even after the target distribution has been recovered approximately through SVI. 
	\begin{table}[htp]
		\centering
		\begin{adjustbox}{max width=\textwidth}
			\begin{tabular}{c|l|cccc}
				\toprule
				Set &Method& $\xi_0(t)$ & $H$ & $\rho$ & $\eta$ \\
				\hline
				\multirow{3}{*}{0} & MSE-Wide& 0.0009& 0.1702& 0.0287& 0.0516\\
				& W1-Recovered-Wide& 0.0365& 0.0037&0.0111 & 0.0519\\ 
				& W1-Recovered-Narrow & 0.0713& 0.0343& 0.0090& 0.1188\\
				\hline
				\multirow{3}{*}{1} & MSE-Wide& 0.0026 & 0.1115& 0.0166& 0.0248\\
				& W1-Recovered-Wide& 0.0317& 0.0095 & 0.0175& 0.0476\\ 
				& W1-Recovered-Narrow & 0.0426& 0.0922& 0.0135& 0.0484\\
				\hline
				\multirow{3}{*}{2} & MSE-Wide&0.0013& 1.3167& 0.2212& 0.2099\\
				& W1-Recovered-Wide& 0.0151& 0.5516& 0.1388& 0.0614\\ 
				& W1-Recovered-Narrow & 0.0096& 8.1784& 0.3462& 0.3377\\
				\hline
				\multirow{3}{*}{3} & MSE-Wide& 0.0123& 0.4199& 0.1235& 0.0957\\
				& W1-Recovered-Wide& 0.0068&0.1462& 0.0392& 0.0158\\ 
				& W1-Recovered-Narrow & 0.0339& 0.1053& 0.0217& 0.1049\\
				
				\bottomrule
			\end{tabular}
		\end{adjustbox}
		\caption{Absolute percentage error between $\bmtheta_{\text{true}}$ and $\bmtheta^\ast$.} 
		\label{tab: calibration-performance-recovered-parameters}
	\end{table}
	
	The corresponding parameter errors are reported in Table \ref{tab: calibration-performance-recovered-parameters}. These results further support the same conclusion. In Cases 0, 2, and 3, W1-Recovered-Wide yields the most accurate parameter estimates among the three methods. Moreover, it consistently improves upon W1-Recovered-Narrow. This demonstrates that the improvement in distribution recovery obtained from the wide strike range leads to not only better pricing performance, but also better parameter identifiability. The results also reveal an intrinsic limitation of the inverse problem that Case 2 remains challenging for all three methods. Although W1-Recovered-Wide improves significantly over W1-Recovered-Narrow, the error of parameter $H$ remains large. This behaviour is consistent with the findings in Table \ref{tab: parameter_recovery_ape} and suggests that the extremely rough regime is intrinsically harder to identify from vanilla option information. 
	
\subsubsection*{Discussion}
The practical effectiveness of the proposed Wasserstein framework rests on two complementary ingredients: an efficient simulation scheme for the model, and a sufficiently accurate recovery of the maturity-wise risk-neutral distributions from market option data. Once the recovered distribution is sufficiently accurate, Wasserstein calibration remains more robust than direct MSE fitting both in terms of out-of-sample pricing and parameter identifiability. The SVI-based preprocessing step offers a practical route for recovering the market-implied terminal distribution of the underlying asset from discrete option prices. However, it is also subject to several intrinsic limitations. First, SVI is a static parametric representation fitted separately at each maturity and therefore recovers only the maturity-wise marginal risk-neutral distribution of $S_T$, rather than the full joint dynamics generated by the rBergomi model. In particular, it does not explicitly encode the roughness of the volatility process or the inter-maturity dependence structure. Second, it is difficult to find precise conditions on the SVI parameters to prevent arbitrage. Although it is tractable by imposing penalty terms to mitigate the static arbitrage, this will, in turn, deteriorate the recovery accuracy. Finally, as observed in previous experiments, the recovered distribution is sensitive to the available strike coverage. While the central part of the distribution can be inferred reasonably well from liquid near-the-money options, the tails are harder to identify and may depend substantially on the parametric extrapolation. As a result, the SVI-recovered target distribution should be interpreted as a practical approximation to the market-implied terminal distribution, rather than an exact reconstruction of the true rBergomi distribution.

	\section{Conclusion}\label{sec: conclusion}
	
	This work introduced a computational framework for efficient pricing and calibration in the rough Bergomi model.
	
	First, we developed a modified Sum-of-Exponentials (mSOE) Monte Carlo scheme within the class of hybrid multifactor approximations. It combines a first-step exact treatment of the singular kernel near the origin with the SOE approximation scheme of Jiang et al. \cite{jiang2017fast}, and an exact Gaussian simulation of the associated multifactor components. The resulting pricing engine achieves stable convergence and high accuracy, particularly for out-of-the-money options. The numerical comparisons with both the SOE and the hybrid scheme clarify that this modification provides a practically useful refinement for repeated pricing in calibration.
	
	Second, we proposed a calibration framework based on distributional matching via the Wasserstein-1 distance. The contribution here is not a new optimal transport methodology, but an adaptation of Wasserstein-based calibration to the rBergomi setting, along with a systematic computational study. When coupled with an efficient Monte Carlo pricing engine, this approach provides a robust alternative to pointwise mean-squared error fitting. Our numerical tests indicate the improved parameter identifiability, more stable optimization behaviour, and stronger out-of-sample performance, including on barrier options.
	
	Future work will focus on several promising directions. A full empirical assessment on real market data remains an important next step. Moreover, the distribution-matching perspective studied here may be explored for other rough and non-Markovian models, provided that efficient pricing engines are available. Finally, a systematic cross-model comparison with alternative stochastic volatility models under a unified calibration protocol is an important direction for future research.

	\section*{Acknowledgements}
	GL acknowledges the support from GRF (project number: 17317122) and the Early Career Scheme (Project number: 27301921), RGC, Hong Kong.

	\bibliographystyle{abbrv}
	\bibliography{myref}
	
\end{document}